\newcommand{\C}{\mathcal{C}}
\newcommand{\R}{\mathbb{R}}
\newcommand{\Z}{\mathbb{Z}}
\newcommand{\LA}[1]{\Lambda_{#1}}
\newcommand{\mode}[2]{\left [#1\right] \ \textrm{mod} \ {#2}}
\newcommand{\V}{\mathcal{V}}
\newcommand{\Vol}[1]{\textrm{Vol}(#1)}
\newcommand{\G}{\mathbf{G}}
\newcommand{\eq}[1]{\begin{equation}#1 \end{equation}}
\newcommand{\Y}{\mathcal{Y}}
\newcommand{\set}{\mathcal{S}}
\newcommand{\W}{W_{\textrm{max}}}
\newcommand{\ef}{{ \textrm{ \it Eff}}}
\newcommand{\aef}{{ \textrm{\it AEff}}}
\newcommand{\diag}{\mathfrak{D}}
\newcommand{\CC}{\mathfrak{C}}
\newcommand{\CG}{\mathsf{C}}
\newcommand{\Hd}{H_{\textrm{diff}}}
\newtheorem{thm}{Theorem}
\newtheorem{rem}{Remark}
\newtheorem{lem}{Lemma}
\newtheorem{cor}{Corollary}
\newtheorem{defin}{Definition}
\newtheorem{ex}{Example}
\author{Amin Jafarian~\IEEEmembership{Member,~IEEE,} and Sriram Vishwanath~\IEEEmembership{Senior~Member,~IEEE} \\
University of Texas at Austin, Austin, TX 78712\\
E-mails: \{jafarian,sriram\}@ece.utexas.edu}
\title{Achievable Rates for $K$-user Gaussian Interference Channels}
\date{ }
\begin{document}
\maketitle
\begin{abstract}
The aim of this paper is to study the achievable rates for a $K$ user Gaussian interference channels {\em for any SNR} using a combination of lattice and algebraic codes. Lattice codes are first used to transform the Gaussian interference channel (G-IFC) into a discrete input-output noiseless channel, and subsequently algebraic codes are developed to achieve good rates over this new alphabet. In this context, a quantity called {\it efficiency} is introduced which reflects the effectiveness of the algebraic coding strategy. The paper first addresses the problem of finding high efficiency algebraic codes. A combination of these codes with Construction-A lattices is then used to achieve non trivial rates for the original Gaussian interference channel.

% The paper shows that this combination of lattice/algebraic codes perform well in terms of total degrees of freedom (for high SNR). Its ``goodness" at finite SNR is yet unknown as it requires the development of new outer bounds.  
\end{abstract}

\section{Introduction}
Interference channels, although introduced many decades ago \cite{shannon62,ahlswede74}, remain one of the important challenges in the domain of multiuser information theory. Although significant progress has been made in the two-user case, such as the two-user ``very strong interference" channel \cite{Carleial1975} and two-user ``strong interference" \cite{Sato1981a} channel, our understanding of the general case is still somewhat limited, with some salient exceptions \cite{Annapureddy2008,Shang2009,Motahari2009a,Etkin2008}. In general, our understanding of both the achievable regions and the outer bounds for general ($K$-user) interference channels (IFCs) needs significant work. The largest section of this body of literature exists for the case of 2-user IFCs. Indeed, it is natural that we better understand 2-user systems before developing an understanding of $K$-user for $K>2$ systems. An achievable region for the 2-user general discrete memoryless IFC is developed in \cite{Han1981} using layered encoding and joint decoding. Also, multiple outer bounds have been developed for this case \cite{Kramer04} (many of which also generalize naturally to the $K$-user case). This Han-Kobayashi achievable region \cite{Han1981}  has been shown to be ``good'' for the 2-user Gaussian interference channel (G-IFC) in multiple cases by clever choices of parameters in the outer bounds \cite{Etkin2008}. Unfortunately, an equivalent body of literature does not exist for $K >2$ G-IFCs.

Here, we delve a little further into the capacity results in \cite{Carleial1975} and \cite{Sato1981a}. Intuitively, for G-IFCs, ``very strong interference" regime is effective when the interference to noise ratio (INR) at each receiver is greater than  the square of its (own) signal to noise ratio (SNR). In this case, the receiver decodes the interference first, eliminates it and then decodes its own message \cite{Sridharan2008}. ``Strong interference" for two-user G-IFCs corresponds to the case where the INR is greater than SNR at each receiver. In \cite{Sato1981a}, it is established that in the strong interference regime, decoding both transmitters' messages simultaneously is the right thing to do at each receiver. However, for G-IFCs with more than two users, such a characterization is not directly applicable. Indeed, there is significant work needed to generalize the results that are considered well-established for two-users G-IFCs to $K>2$-user G-IFCs.

As the exact capacity results are few and far between, there is a large and growing body of literature on the degrees of freedom (DoF) of more-than-two-user G-IFCs using {\em alignment} \cite{Motahari2009,Cadambe2008,BreslerParekhTse}. Our interest in this paper is to move away from high SNR DoF analysis and focus on developing the finite SNR achievability results for these channels. By doing so, we desire to take this body of literature towards obtaining the achievable rate regions that utilize alignment at any SNR, and thus a step closer towards better understanding its capacity limits. To this end, we combine  structured coding strategies (lattices) with algebraic alignment techniques. As an example, we use such a methodology to characterize the capacity of $K$-user Gaussian channels \cite{Sridharan2008}. The main idea in \cite{Sridharan2008} is that each receiver first decodes the {\em sum} of all the interferers, eliminates it from the received signal and then decodes its own signal. We further generalize notion in \cite{Sridharan2008a}, where a layered lattice scheme is used to achieve rates that correspond to a DoF of greater than $1$. However, the scheme in \cite{Sridharan2008a} is not necessarily optimal even in terms of the DoF achieved and thus may not be ``good" at finite SNR as well. This paper aims at presenting improved achievable rates for $K$-user Gaussian interference channels over those in \cite{Sridharan2008a}.

There is limited literature on effective and computable outer bounds for multiuser $K>2$ Gaussian interference channels.  Fortunately, outer bounds are now better understood for the case of the degrees of freedom of this channel \cite{Host-Madsen2005,EO09,Cadambe2008}. In this context, it has been shown that for a general $K$-user interference channel, $\frac{K}{2}$ is an upper bound on the total DoF \cite{Host-Madsen2005}. Multiple results have been presented in recent years showing that this upper bound is achievable. For time/frequency varying channels, \cite{Cadambe2008} shows that $\frac{K}{2}$ is also achievable and therefore is the total degrees of freedom of such channels. For constant channels, \cite{Cadambe2009} presents an achievable scheme that has a non-trivial gap from the upper limit of $\frac{K}{2}$. Recently, \cite{Etkin2009} and \cite{Motahari2009} show the existence of schemes that achieve $\frac{K}{2}$ total degrees of freedom for interference channels where the interference channel gains are  irrational. \cite{Etkin2009} also shows that the total degrees of freedom is bounded away from $\frac{K}{2}$ when the   channel gains are assumed rational. In \cite{Motahari2009}, a coding scheme based on layering is presented that achieves the $K/2$ upper limit on DoF for certain classes of interference channels. Given that we understand DoF limits better than outer bounds on the finite-SNR rate region, we resort to showing that the achievable schemes developed in this paper are ``good'' in terms of the DoF it achieves. Checking if they are good at any finite SNR remains an open problem.

%As paper's aim is to extend and to move away from the DoF analysis that has dominated the $K>2$ interference channel literature towards a general finite (any) SNR analysis of rates for interference channels. Since the literature on  outer bounds for finite SNR is limited, we resort to using the established DoF literature to show that our achieved rates are, in fact, good. 

%Recently, for the two user case, the work \cite{Etkin2008} employs deterministic channels to present outerbounds and achievable schemes that are very close in general. This, along with numerous usages of deterministic channels in characterizing the total degrees of freedom \cite{Cadambe2008a,CadambeJafarWirelessNetworks,JafarVishwanath} of a $K$-user interference channel signifies the importance of these channels.  

The rest of the paper is organized as follows: The channel model for the $K$-user Gaussian IFC is described in Section \ref{sec:model}. Section \ref{sec:prelim} covers definitions \& notations used in this work. Section \ref{sec:Overview:GIFC} summarizes the main results of this paper.  In Section \ref{sec:ddc}, a connection between the original Gaussian Interference channel(G-IFC) and an equivalent discrete deterministic interference channel (DD-IFC) is built. For the equivalent DD-IFC, Section \ref{sec:ddc} defines and determines ``efficient" codebooks. Subsequently, Section \ref{sec:GIFC} applies these efficient discrete-channel codebooks to the original Gaussian interference channel.  Section \ref{sec:GIFC} is subdivided into two sections. Subsection \ref{sec:IGIFC} provides an achievable scheme for a G-IFC with integer channel gains. Subsection \ref{sec:RGIFC} generalizes this coding scheme to settings with non-integral channel gains. Finally, Section \ref{sec:conc} concludes the paper. 
%Subsection \ref{} builds a connection between codebook efficiency and the total degrees of freedom for the interference channel. 

\section{Channel Model}
\label{sec:model}
In this paper, we study a $K$-user G-IFC where the received signal is expressed as:
\begin{equation}
\label{receiver}
Y=HX+N.
\end{equation}

%\begin{center}
\hspace{1.8in}
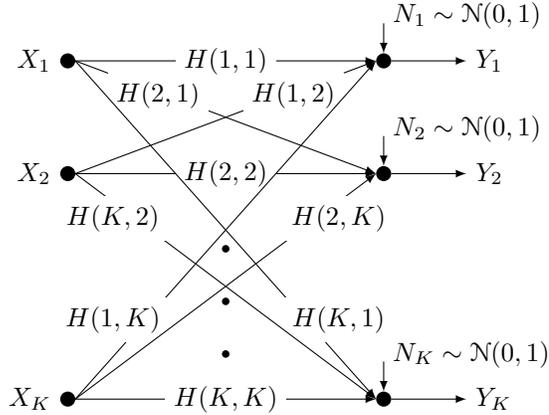
\begin{figure}
\label{fig:model}
\hspace{1.8in}
\begin{tikzpicture}
\draw[-latex] (35mm,15mm)--(75mm,15mm)
node[midway, fill=white] {$H(2,2)$} ;
\draw[-latex] (35mm,30mm)--(75mm,15mm)
node [midway, xshift=-9mm, yshift=3mm, fill=white] {$H(2,1)$};
\draw[-latex] (35mm,30mm)--(75mm,-15mm)
node [midway, xshift=15mm, yshift=-12mm, fill=white] {$H(K,1)$};
\draw (33mm,30mm)%--(33mm,30mm) 
node [left] {$X_1$};
\draw[-latex] (35mm,30mm)--(75mm,30mm)
node[midway, fill=white] {$H(1,1)$} ;
\fill (34mm,30mm) circle (1mm);
\fill (76mm,30mm) circle (1mm);
%\node at  (76,30mm) [circle,draw] {P};
\draw[-latex] (76mm,35mm)--(76mm,31mm)
node [midway, above right ] {$N_1 \sim \mathcal{N}(0,1)$};
\draw[-latex] (77mm,30mm)--(87mm,30mm)
node[right]{$Y_1$};
\draw[-latex] (35mm,15mm)--(75mm,30mm)
 node [midway, xshift=9mm,yshift=3mm, fill=white]{$H(1,2)$};
\draw[-latex] (35mm,-15mm)--(75mm,30mm)
node [midway, xshift=-15mm,yshift=-12mm, fill=white] {$H(1,K)$};
\draw (33mm,15mm)%--(33mm,30mm) 
node [left] {$X_2$};
\fill (34mm,15mm) circle (1mm);
\fill (76mm,15mm) circle (1mm);
%\node at  (76,30mm) [circle,draw] {P};
\draw[-latex] (76mm,20mm)--(76mm,16mm)
node [midway, above  right ] {$N_2 \sim \mathcal{N}(0,1)$};
\draw[-latex] (77mm,15mm)--(87mm,15mm)
node[right]{$Y_2$};
\draw[-latex] (35mm,-15mm)--(75mm,15mm)
node [midway, xshift=15mm, yshift=9mm, fill=white] {$H(2,K)$};
\fill (55mm,-2mm) circle (0.5mm);
\fill (55mm,5mm) circle (0.5mm);
\fill (55mm,-9mm) circle (0.5mm);
\draw (33mm,-15mm)%--(33mm,30mm) 
node [left] {$X_K$};
\draw[-latex] (35mm,-15mm)--(75mm,-15mm)
node[midway, fill=white] {$H(K,K)$} ;
\fill (34mm,-15mm) circle (1mm);
\fill (76mm,-15mm) circle (1mm);
\node at  (76,-15mm) [circle,draw] {P};
\draw[-latex] (76mm,-10mm)--(76mm,-14mm)
node [midway, above right ] {$N_K \sim \mathcal{N}(0,1)$};
\draw[-latex] (77mm,-15mm)--(87mm,-15mm)
node[right]{$Y_K$};
\draw[-latex] (35mm,15mm)--(75mm,-15mm)
node [midway,xshift=-15mm, yshift=9mm, fill=white] {$H(K,2)$};

%\draw (60mm,-20mm) node[below] {{\bf Fig.} 1};
\end{tikzpicture}
\caption{Channel Model}
\end{figure}
%\end{center}

In Equation (\ref{receiver}), $X$ denotes the vector-input of size $K$, $Y$ the vector-output of size $K$ and $N$ the vector-noise comprised of independent Gaussian real noise with  power $Z$ and zero mean ($Z$ is assumed to be one through out the paper where it does not hurt the generality of the model). Further, each transmitter satisfies a power constraint, which over $n$ channel uses for User $i$ is given by:

\begin{equation}
\frac{1}{n}\sum_{j=1}^n |X_i(j)|^2 \le P_i.
\end{equation}

In this paper, we focus on a general $K$ user interference model.
%Without loss of generality\footnote{One can obtain this by scaling the received signal and the power constraints.}, we assume that the direct channel gains equal one. % and we write this interference requirement as:
%\begin{align*}
%&H(i,i)=1 \tag{h1}\\
%&H(i,j)>1  \ \ \ \ \forall i \neq j, 1\le i,j \le K \label{model}\tag{h2} 
%\end{align*}
With the assumption that $P_i=P$ for all $i$'s, i.e., all the users have the same power constraint. We extend this to a most general setting in the last Section of this work.

Figure \ref{fig:model} shows the channel model. The channel becomes symmetric if we set $H(i,j)=h$ for all $i\neq j$, and $H(i,i)=a$ for all $i$'s.
% Along this work, we first assume channel gains are integer.
%and $P_i=P$ for all $i$'s. 

\section{Preliminaries and Notations}
\label{sec:prelim}
Calligraphic fonts are used to represent sets (such as $\set$).  $|\set|$ represents the cardinality of the set $\set$. Let $r=\mode{a}{b}$ denote the remainder of $a$ when divided by $b$, therefore, $0\le r<a$ and $b$ divides $a-r$; we denote this by $b|a-r$. 

Let $\C_i$ be the codebook at the Transmitter $i$. Assuming that the Transmitter $i$ employs $t$ channel uses to transmit a codeword $X_i \in \C_i$, the rate corresponding to this codebook is given by:
\[R_i=\frac{1}{t}\log(|\C_i|).\] 

The achievable sum rate for a $K$-user G-IFC is defined as: $\displaystyle{\sum_{i=1}^K R_i}.$

Let $P_i$ be the power constraint at Transmitter $i$. The total degrees of freedom is the slope of maximum sum-rate over all coding strategies with respect to $SP=\frac{1}{2}\log\left(\displaystyle{\sum_{i=1}^K P_i}\right)$, as $P_i$ tends to infinity. Formally, we can write it as:
\[TDoF=\lim_{SP \rightarrow \infty}\frac{\displaystyle{\sup_{\textrm{\scriptsize all coding strategies}} \sum_{j=1}^K R_j}}{SP}\]
%\frac{1}{2}\log\left(\displaystyle{\sum_{i=1}^K P_i}\right)}\] 

A lattice is an additive subgroup of $\mathbb{R}^n$ isomorphic to $\Z^n$, where $\mathbb{R}$ and $\Z$ are the set of real and integer numbers. A Construction-A lattice $\Lambda$ is defined as the following \cite{Sloane}:

\[\Lambda=\{x \in \mathbb{R}^n: x=\G z, z\in \Z^n\},\]
where $\G \in \mathbb{R}^{n  \times n}$ is a full rank  $n$ by $n$ matrix.

The Voronoi region of a lattice point $\lambda \in \Lambda$ is defined as all the points in $\mathbb{R}^n$ that are closer to $\lambda$ than any other lattice point. Because of the inherent symmetry  of lattices, we can define the Voronoi region corresponding to zero (which is always a lattice point) as the following:

\[\V=\{x \in \mathbb{R}^n:\|x\|\le \|x-\lambda\| \ \textrm{ \ \ for all \ } \lambda \in \Lambda\}.\]

With a slight abuse of notation, we write: 
\[\mode{a}{\Lambda}=b,\] if and only if $a-b \in \Lambda$.

The second moment per dimension of a lattice is defined as:

\[\sigma^2(\V)=\frac{\int_\V \|x\|^2 dx }{n \Vol{\V}}.\]
 
Let $\mathcal{G}(\Lambda)$ denote the normalized second moment of the lattice $\Lambda$, defined as:
\[\mathcal{G}(\Lambda)=\frac{\sigma^2(\V)}{\Vol{\V}^\frac{2}{n}}.\]
It is known that $\mathcal{G}(\Lambda) \ge \frac{1}{2 \pi e}$  for a general Construction-A lattice \cite{Erez2005}. A lattice is said to be ``good for quantization" if $\mathcal{G}(\Lambda)$ tends to $\frac{1}{2 \pi e}$ as $n$ grows to infinity. Similarly, $\LA{}$ is called ``good for channel coding" if probability of error in decoding a Gaussian noise $Z$ with unit variance from the signal $Y=\lambda+Z$ where $\lambda \in \Lambda$ using lattice decoding (nearest lattice point) goes to zero as $n\rightarrow \infty$ \cite{Erez2005}.

We refer to lattices $\Lambda_c$ and $\Lambda_f$ as a nested pair if $\Lambda_c \subset \Lambda_f$, where subscripts $c$ and $f$ are used to denote the notions of a  ``coarse" and a ``fine" lattice. The nesting ratio of the nested pair $\LA{c} \subset \LA{f}$ is defined as:
\[\rho(\Lambda_c,\Lambda_f) \triangleq \left(\frac{\Vol{\V_c}}{\Vol{\V_f}} \right )^\frac{2}{n}.\]

Note that from how we define ``goodness" of lattices in the above paragraph, if both the lattices $\LA{c}$ and $\LA{f}$ are ``good for quantization", the nesting ratio, $\rho(\Lambda_c,\Lambda_f)$, tends to $\frac{\sigma^2(\V_c)}{\sigma^2(\V_f)}$ for large $n$'s. 

\section{Our Approach: The Central-Dogma}
\label{sec:Overview:GIFC}
Here we describe the essence and intuition behind our approach to obtaining achievable  rates for the $K$ user G-IFC. First, consider those G-IFCs as defined by Equation (\ref{receiver})  that have integer-valued channel gains. In this case, the G-IFC received signal vector is a superposition of integer-scaled values of the transmit signals plus additive noise. To develop an achievable region, we use a two-step coding process as follows. First, each transmitter restricts itself to using a transmit alphabet comprising of elements from the {\em same} ``good" lattice. We call a lattice ``good" if it is good for both quantization and channel coding as defined in the last section.  The channel structure ensures that the receiver observes a valid lattice point, and as the lattice is ``good for channel coding'', the Gaussian noise can be removed at {\em each} of the receivers. We call the resulting noiseless channel a discrete deterministic interference channel (DD-IFC). This DD-IFC is a lattice input lattice output linear transformation channel. Second, we use algebraic alignment mechanisms to determine achievable rates for this DD-IFC. In other words, we determine the largest (symmetric) subset of the ``good" lattice where each receiver can determine its corresponding transmit lattice point.

This two step process is described mathematically as follows:
\vspace{0.3cm}

\noindent{\bf Step 1:} We choose a pair of ``good" n-dimensional nested lattices:
  
\begin{align}
\LA{f}&=\G_1 \left(\frac{1}{q}\G_2.\Z_q+\Z^n \right), \label{eqn:construct1}\\
\LA{c}&=\G_1 \Z^n. \label{eqn:construct2}
\end{align}
where $n$ is a large positive integer, $\G_1 \in \mathbb{R}^{n\times n}$, $\G_2$ is an $n$-dimensional vector from $\Z_q^n$, the operation ``$.$" indicates modulo $q$ multiplication, and $q$ is a prime number. Note that the modulo multiplication can be considered as a real multiplication because of the additive integer part in Equation (\ref{eqn:construct1}).

Note that $\LA{c}\subset\LA{f}$. From \cite{Erez2005}, we know there exist ``good" matrices $\G_1$ and $\G_2$ that render these lattices to be good simultaneously for quantization and channel coding. We choose $\CG_0=\LA{f}\cap \V{c}$ as the transmit alphabet over $n$ channel uses for each transmitter in the system. 
  Given that the channel coefficients are integral, the received signal across $n$ channel uses at each receiver is given by
\[
{\bf Y}_i = \sum_{j=1}^K H(i,j) {\bf X}_j + {\bf N}_i
\]
where ${\bf X}_j$ is an $n$-length transmit lattice point from Transmitter $j$ and ${\bf N}_i$ is the $n$-length Gaussian noise observed at Receiver $i$. Note that $ \sum_{j=1}^K H(i,j) {\bf X}_j$ is also an element of the fine Lattice $\LA{f}$. Since $\LA{f}$  is given to be ``good" for channel coding, Receiver $i$ can, with high probability, determine $ \sum_{j=1}^K H(i,j) {\bf X}_j$ from ${\bf Y}_i$. Thus each receiver can eliminate noise and the system becomes, with high probability, equivalent to the DD-IFC: 

\[
{\bf Y} = H {\bf X} 
\]

where ${\bf X, Y}$ are the $K \times n$ matrices comprised of ${\bf X}_i, {\bf Y}_i~~\forall i$ respectively.
\vspace{0.3cm}

\noindent{\bf Step 2:} Note that finding the sum-rate capacity (and the entire capacity region) of the DD-IFC is a highly non-trivial problem as it is an n-letter channel. Thus, this step focuses on reducing it to a tractable analysis. In order for each receiver to recover its intended message in the DD-IFC, we require algebraic coding to be superposed on the lattice alphabet, i.e., each transmitter uses a (largest possible) subset of transmit alphabet $\CG_0$ that can be decoded at its intended receiver. A closer look at the construction of the fine lattice $\LA{f}$ given in Equation (\ref{eqn:construct1}), shows that every element $a \in \Z_q$ corresponds to a unique point in $\CG_0$. This means the codebook $\C_i$ at Receiver $i$, which is a subset of $\CG_0$, corresponds to a subset  $\hat{\C}_i\in \Z_q$, and can be obtained as following:

\eq{\label{ci:const}\C_i=\left\{\G_1\left(\frac{1}{q} \G_2 \hat{\C}_i + \Z^n \right)\right\}\cap \V_c.}

We transform the problem of construction $\C_i$ from $\CG_0$ at each transmitter into one of determining a {\em one dimensional codebook} $\hat{\C}\in \Z_q$ for the same channel. Let ${\bf E} \in (\hat{\C}_1 \times \hat{\C}_2 \times... \times \hat{\C}_K)$. From Equation (\ref{ci:const}) ${\bf X}_i$ at Transmitter $i$ in the DD-IFC has the following form:
\eq{{\bf X}_i=\frac{1}{q} \G_1\G_2 { E}_i +\G_1 I_i,}
for some $I_i\in \Z^n$, where ${ E}_i$ is the $i^{th}$ element of ${\bf E}$. Thus, we can rewrite ${\bf Y}_i$ at Receiver $i$ as:

\eq{\label{receiver:i} {\bf Y}_i=\sum_{j=1}^K H(i,j)X_j=\frac{1}{q} \G_1\G_2 \sum_{j=1}^K H(i,j) { E}_j +\G_1 I,}
where $I=\displaystyle{\sum_{j=1}^K H(i,j)I_j\in \Z^n}$ because $H$ is an integer matrix.

Assuming that the first entry of the vector $\G_2$ is one, one can obtain $\hat{Y}_i=\displaystyle{\sum_{j=1}^K H(i,j) E_j}$ from  the first entery of the vector $q\G_1^{-1}Y_i \textrm{mod} \ q$, if $\hat{Y}_i<q$ for all $i \in \{1,2,...,K\}$ and $E_i\in \hat{\C}_i$. In other words, we can define an equivalent {\em scalar} DD-IFC  from ${\bf E} \in (\hat{\C}_1\times\hat{\C}_2\times...\times\hat{\C}_K)$ to $\hat{Y}$ as the following:
\eq{\hat{Y}=H {\bf E},}
if the codebook $\hat{\C}_i$'s satisfy the following condition:

\eq{\W=\max\left\{\max_{{\bf E }\in (\hat{\C}_1 \times \hat{\C}_2 \times... \times \hat{\C}_K)} H {\bf E} \right\}+1 \le q,}
where the outer maximization returns the maximum element of the interior vector. 
%Note that, after receiver $i$ obtains $\hat{Y}_i$, it actually constructs a one dimensional DD-IFC channel and can decode $E_i$ from $\hat{Y}_i$ if the codebooks are designed properly, and therefore find $X_i$ from Equation (\ref{ci:const}). 

Thus, with a slight abuse of notation for convenience, we replace the original G-IFC with a scalar DD-IFC given by:

\eq{\label{dd-ifc:model} Y=HX,}

such that

\eq{\label{maxh}\W=\max\left\{\max_{X \in ({\C}_1 \times {\C}_2 \times... \times {\C}_K)} HX \right\}+1 \le q.}

Note that Equation (\ref{maxh}) represents an ``alphabet constraint" for the scalar DD-IFC. This alphabet constraint is more stringent than the one imposed using lattice structure on the G-IFC, but it is sufficient condition that enables us to obtain non-trivial achievable rates for the original G-IFC. For the remainder of this paper, we study the channel definition given by (\ref{dd-ifc:model}) and (\ref{maxh}), and then connect the results obtained with achievable rates for the original Gaussian channel. This concludes the central dogma of our approach in analyzing G-IFCs.

Next, we define the notion of {\em efficiency} for scalar DD-IFCs and connect it with the achievable rate for this channel.  
%Note that, presence of $\V_c$ in Equation (\ref{ci:const}) is because of power constraint at each transmitter in G-IFC. This, however, is not appeared in the one dimensional DD-IFC model. Instead, all the codebook $\hat{\C}_i$'s need to satisfy Equation (\ref{maxh}).

%Equations (\ref{dd-ifc:model}) and (\ref{maxh}) characterizes the equivalent DD-IFC channel corresponding to the G-IFC channel given in Section \ref{sec:model}. Now, between all the codebooks $\hat{\C}_i\subset \Z_q$ satisfying Equation (\ref{maxh}), we prefer to use those achieving higher rates. This motivates us to define the new measure {\it efficiency} for the DD-IFC's.  
%Assume that we use codebooks ${\hat{\C}}=)$ at each transmitter, we define ``efficiency" of the code $\C$ as the following:
\begin{defin}
\label{def:ef}
``Efficiency" is defined for a set of  codebooks ${\hat{\C}}=({\hat{\C}}_1,\hat{\C}_2,...,\hat{\C}_K)$ in DD-IFC as the following:
\eq{\label{ddof} \ef(\hat{\C})=\frac{\displaystyle{\sum_{i=1}^K \hat{R}_i}}{\log\left(\W\right)},}
where $\hat{R}_i=\frac{1}{m}\log(|\hat{\C}_i|)$, and $m$ is the number of channel uses Transmitter $i$ utilizes to convey its codeword. %Therefore, in general we have:
%\eq{\label{ef:up} \ef(\hat{\C})\le\frac{\displaystyle{\log\left(\prod_{i=1}^K |\hat{\C}_i|\right)}}{\log\left(\W\right)}.}
%Through out this work, we focus on the case $N=1$. In this case, the inequality stated above holds with equality.
\end{defin}
%Note that the following gives an upperbound on the efficiency of the channel codebook $\C$:

%\eq{\label{ef:up}\ef(\C)\le \frac{\log\left(\displaystyle{\prod_{i=1}^K |\C_i|}\right)}{\log\left(\W\right)}.}
%In the equation stated above, equality holds if each user uses one channel use to communicate its codeword.
%We refer to this quantity as the ``efficiency" of the codebooks. 
%In section \ref{sec:IGIFC}, we will see that there is a close relation between this quantity and the total degrees of freedom of the G-IFC. 

In \cite{Jafarian2010}, we establish the following theorem:
\begin{thm}
\label{thm:2010}
The following sum-rate is achievable for a scalar G-IFC (\ref{receiver})
\[\sum_{i=1}^K R_i=\frac{1}{2}\log\left(\frac{P}{N}\right)\times \ef(\hat{\C})\]
\end{thm}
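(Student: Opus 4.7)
The plan is to instantiate the two-step central dogma of Section~\ref{sec:Overview:GIFC} and then carefully track the resulting rate. The $\frac{1}{2}\log(P/N)$ factor will arise from the lattice nesting ratio, while $\ef(\hat{\C})$ will capture the fraction of the fine-lattice alphabet that survives the DD-IFC reduction. Throughout, $H$ is assumed integer-valued so that the Step~1 argument of the dogma applies verbatim.

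First I would fix a large block length $n$ and a prime $q$ (to be pinned down in terms of $n$, $P$, $N$), and invoke the existence result of \cite{Erez2005} to construct the nested pair $\LA{c}\subset\LA{f}$ as in (\ref{eqn:construct1})--(\ref{eqn:construct2}), with both lattices simultaneously good for quantization and for channel coding. Scale $\LA{c}$ so that $\sigma^2(\V_c)=P$, matching the per-user power constraint, and scale $\LA{f}$ so that $\Vol{\V_f}^\frac{2}{n}$ is only slightly above $2\pi e N$. By quantization-goodness, $\Vol{\V_c}^\frac{2}{n}\to 2\pi e P$, and by Poltyrev-goodness of $\LA{f}$ additive Gaussian noise of variance $N$ can be removed from any element of $\LA{f}$ with error probability vanishing in $n$. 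The nesting ratio therefore tends to $P/N$, and the transmit alphabet $\CG_0=\LA{f}\cap\V_c$ has $|\CG_0|=q$ with $\frac{1}{n}\log q\to\frac{1}{2}\log(P/N)$; Bertrand's postulate lets us select $q$ prime satisfying this asymptotic without loss.

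Next I would verify the DD-IFC reduction. With integer $H(i,j)$, $\sum_j H(i,j)\mathbf{X}_j\in\LA{f}$ whenever each $\mathbf{X}_j\in\CG_0\subset\LA{f}$, so receiver $i$ applies nearest-lattice-point decoding in $\LA{f}$ to recover that sum from $\mathbf{Y}_i=\sum_j H(i,j)\mathbf{X}_j+\mathbf{N}_i$; a union bound across the $K$ receivers keeps the overall error probability vanishing. This realizes a noiseless DD-IFC. Each transmitter then superposes an efficient scalar codebook $\hat{\C}_i\subset\Z_q$ on $\CG_0$ through the bijection in (\ref{ci:const}), and the alphabet constraint $\W\le q$ of (\ref{maxh}) ensures that the scalar DD-IFC reduction of Section~\ref{sec:Overview:GIFC} applies.

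Finally the rate bookkeeping gives the claim. Each transmitter spends $n$ G-IFC channel uses per scalar symbol $E_i\in\hat{\C}_i$, so user $i$ achieves $R_i=\frac{1}{n}\log|\hat{\C}_i|$. Summing and using Definition~\ref{def:ef}, $\sum_{i=1}^K R_i=\frac{1}{n}\sum_i\log|\hat{\C}_i|=\frac{1}{n}\cdot \ef(\hat{\C})\cdot \log\W$. Choosing $\W=q$ and letting $n\to\infty$ yields $\sum_i R_i\to\ef(\hat{\C})\cdot \frac{1}{2}\log(P/N)$, as claimed. The main obstacle is aligning three asymptotic regimes at once: quantization-goodness on $\V_c$, channel-coding-goodness on $\V_f$, and the choice of $q$ prime with $\log q\sim \frac{n}{2}\log(P/N)$; the simultaneous goodness result of \cite{Erez2005} is precisely what makes these three constraints jointly satisfiable.
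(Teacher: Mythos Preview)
Your high-level architecture matches the paper's: build a good nested pair $\LA{c}\subset\LA{f}$, remove the Gaussian noise by lattice decoding to land on the DD-IFC, embed a scalar code $\hat{\C}_i\subset\Z_q$ via (\ref{ci:const}), and then do the rate accounting. Where the argument breaks is the last line, ``Choosing $\W=q$ and letting $n\to\infty$.'' The quantity $\W$ is \emph{not} a free parameter: by (\ref{maxh}) it is completely determined by $H$ and the codebook $\hat{\C}$. For a fixed finite $\hat{\C}$, $\W$ is a fixed integer, while your construction forces $q$ to grow so that $\tfrac{1}{n}\log q\to\tfrac{1}{2}\log(P/N)$. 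Hence $\tfrac{1}{n}\log\W\to 0$, and the sum-rate you compute, $\ef(\hat{\C})\cdot\tfrac{1}{n}\log\W$, tends to zero rather than to $\ef(\hat{\C})\cdot\tfrac{1}{2}\log(P/N)$. In other words, a single scalar codebook cannot fill a growing $\Z_q$; you must let the DD-IFC codebook grow with $n$ while keeping its efficiency (approximately) equal to $\ef(\hat{\C})$.

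The paper's remedy (carried out explicitly in the proof of Theorem~\ref{integer_GIFC}, which is the in-paper analogue of the result you are proving) is the layered construction (\ref{layered_eq}): from the base code $\hat{\C}$ one builds $\hat{\C}^l$ with bin size $W\approx\W$, so that $|\hat{\C}^l_i|=|\hat{\C}_i|^l$ and the corresponding maximum output satisfies $\tilde W_{\max,l}\lesssim\W^{l+1}$. One then couples $l$ to $n$ via (\ref{l,n}) so that $\tfrac{1}{n}\log\tilde W_{\max,l}\to\tfrac{1}{2}\log(P/N)$, picks a prime $q$ in the window (\ref{cond:q}), and only then does the rate calculation go through: $\sum_i R_i=\tfrac{l}{n}\sum_i\log|\hat{\C}_i|=\ef(\hat{\C}^l)\cdot\tfrac{1}{n}\log\tilde W_{\max,l}\to\aef(\hat{\C})\cdot\tfrac{1}{2}\log(P/N)$, with $\aef(\hat{\C})=\ef(\hat{\C})$ by Corollary~\ref{w=wmax}. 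Without this layering step your scheme has no mechanism to make $\W$ track $q$, and the claimed limit is not justified.
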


Note that we desire that $\ef(\hat{\C})$ be as close to $K/2$ as possible as efficiency represents the total DoF achieved in the system.
For a symmetric interference channels with integer coefficients, \cite{Cadambe2009} and \cite{Motahari2009} design codes achieving more than one degrees of freedom.
 Theorem \ref{thm:2010} from \cite{Jafarian2010} used ``arithmetic progression codes'' to find efficiencies (much) greater than $1$  for a more general class of interference channels. This results in  ``good" achievable DoF that can be achieved at finite (moderate) SNRs. Although \cite{Jafarian2010} takes a significant step in determining non-trivial rates using this technique, its approach can be generalized and improved further, which is the aim of this paper.

 Before presenting the main result of this paper, we start with the following definitions. Let $\vec{r}$ be a $K$ dimensional vector. Define $\diag (\vec{r})$ to be a $K \times K$ matrix with $\vec{r}$ on its diagonal and zero on non-diagonal entries. 

\begin{rem}
\label{diag_inv}
Let $\vec{d}=(d_1,d_2,...,d_K)$ be an $K$ dimensional vector, $\left(\diag(\vec{d})\right)^{-1}$, is a diagonal matrix with diagonal entries equal to $(\frac{1}{d_1}, \frac{1}{d_2},...,\frac{1}{d_K}).$ With abuse of notations, in this section, we write $\vec{d}^{-1}=(\frac{1}{d_1}, \frac{1}{d_2},...,\frac{1}{d_K})$, and  refer to the matrix $\left(\diag(\vec{d})\right)^{-1}$ as $\diag(\vec{d}^{-1})$.
\end{rem} 

Next, we define the relation between two matrices.

\begin{defin}
\label{relation}
Let $H$ and $H'$ be two integer $K \times K$ matrices. We write $H\sim H'$, $H$ and $H'$ are equivalent, if there exist integer vector $\vec{r}=(r_1,r_2,...,r_K)$ and rational vector $\vec{d}=(d_1,d_2,...,d_K)$, such that:
\[\diag(\vec{d}^{-1})  H \diag(\vec{r})=H'.\]
\end{defin}

 Lemma  \ref{eq_relation} in subsequent sections proves that the defined binary relation, $\sim$  is in fact an equivalence relation. We denote the equivalence class of matrix $H$ as $\CC[H]$. We define $\ef(\CC[H])$ as
\eq{\label{ef_class}\ef(\CC[H])=\sup_{\hat{H} \in \CC[H],  \C_{\hat{H}}} \ef(\C_{\hat{H}}),}
where $\C_{\hat{H}}$ is a  codebook for the DD-IFC channel $\hat{H}$. In particular, we can consider arithmetic progression codebooks that will result in higher efficiencies as demonstrated in Examples \ref{ex1} and \ref{ex2}.
%the maximum efficiency of the DD-IFC channel $\hat{H}$. We state it in a formal fashion as:
%\[\C_{H}= \argmax_\C \ef(\C),\] for channel $H$.

Here we state the main result of this paper for G-IFC's.

\begin{thm}
\label{integer_GIFC}
%Let $\hat{\C}_l$ be a sequence of decodable codes for the equivalent deterministic channel $Y=HX$, satisfying the following:
%\[|\hat{\C}_{i,l}|=|\C_i|^l,\]
%and $\displaystyle{\aef(\hat{\C})=\lim_{l \rightarrow \infty} \ef(\hat{\C}_l)}$.
For the G-IFC channel model in Section \ref{sec:model} with integer channel gains, i.e., $H\in \Z^{K\times K}$, and any $\epsilon>0$, the following sum-rate is achievable:

\[\sum_{i=1}^K R_i=\frac{1}{2}\log\left(\frac{P}{N}\right)\times(\ef(\CC[H])-\epsilon).\]
\end{thm}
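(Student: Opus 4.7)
The plan is to reduce the theorem to Theorem \ref{thm:2010} by exhibiting, for every $\epsilon>0$, a codebook $\C$ for the scalar DD-IFC with channel matrix $H$ whose efficiency satisfies $\ef(\C)\ge \ef(\CC[H])-\epsilon$. Since $\ef(\CC[H])$ is by definition a supremum over pairs $(\hat{H},\C_{\hat{H}})$ with $\hat{H}\in\CC[H]$, the problem splits into two sub-problems: (i) given a near-optimal codebook $\C_{\hat{H}}$ on some equivalent $\hat{H}$, transfer it back to a codebook on $H$ while controlling the efficiency loss; and (ii) show that this loss can be made arbitrarily small.

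\textbf{Transferring a codebook from $\hat{H}$ to $H$.} Given $\hat{H}=\diag(\vec{d}^{\,-1})H\diag(\vec{r})\in\CC[H]$ with $\vec{r}\in\Z^K$ and $\vec{d}\in\mathbb{Q}^K$, and a codebook $(\hat{\C}_1,\ldots,\hat{\C}_K)$ for the DD-IFC with matrix $\hat{H}$, I would define $\C_i=r_i\hat{\C}_i$. Setting $X_j=r_j\hat{X}_j$, a direct computation gives $(HX)_i = (H\diag(\vec{r})\hat{X})_i = d_i(\hat{H}\hat{X})_i$. Integrality of the left-hand side (since $H$, $X$ are integer), together with $\hat{H}\in\Z^{K\times K}$ in Definition \ref{relation}, forces row $i$ of $\hat{H}$ to be divisible by the denominator of $d_i$, so $d_i(\hat{H}\hat{X})_i$ is always an integer. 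Receiver $i$ divides its observation by the known rational constant $d_i$, recovers $(\hat{H}\hat{X})_i$, and applies the decoder of $\hat{\C}$ to extract $\hat{X}_i$. The sum-rate is preserved, $\sum R_i=\sum \hat{R}_i$, while the alphabet bound becomes $W_{\max}\le D\,\hat{W}_{\max}$ with $D=\max_i|d_i|$, giving the bound $\ef(\C)\ge \ef(\hat{\C})\cdot\bigl(\log \hat{W}_{\max}\bigr)/\bigl(\log \hat{W}_{\max}+\log D\bigr)$.

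\textbf{Amplification to absorb $D$.} The loss factor above is strictly less than one when $D>1$, and a fixed $\hat{\C}$ may have a modest $\hat{W}_{\max}$. To push the factor to $1-o(1)$, I would amplify $\hat{\C}$ by a radix-$\hat{W}_{\max}$ stacking: replace each codeword $\hat{X}_i\in\hat{\C}_i$ by $\hat{X}'_i=\sum_{t=0}^{T-1}\hat{X}_{i,t}\,\hat{W}_{\max}^{\,t}$ with independent digits $\hat{X}_{i,t}\in\hat{\C}_i$. Since $(\hat{H}\hat{X}_t)_i<\hat{W}_{\max}$ for every layer, $(\hat{H}\hat{X}')_i=\sum_t \hat{W}_{\max}^{\,t}(\hat{H}\hat{X}_t)_i$ is a unique base-$\hat{W}_{\max}$ expansion and the layers can be peeled off sequentially using the original decoder. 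The stacked codebook $\hat{\C}'$ has sum-rate $T\sum\hat{R}_i$ and alphabet bound at most $\hat{W}_{\max}^{\,T}$, so $\ef(\hat{\C}')=\ef(\hat{\C})$ while $\log \hat{W}_{\max}$ has been multiplied by $T$. Applying the transfer step to $\hat{\C}'$ yields $\ef(\C')\ge \ef(\hat{\C})\cdot T\log \hat{W}_{\max}/(T\log \hat{W}_{\max}+\log D)$, which tends to $\ef(\hat{\C})$ as $T\to\infty$.

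\textbf{Assembly and anticipated obstacle.} Given $\epsilon>0$, I would pick $(\hat{H},\hat{\C})\in\CC[H]$ with $\ef(\hat{\C})\ge \ef(\CC[H])-\epsilon/2$, then choose $T$ large enough that the amplification loss is at most $\epsilon/2$; the resulting codebook $\C'$ for the DD-IFC with channel matrix $H$ satisfies $\ef(\C')\ge \ef(\CC[H])-\epsilon$, and Theorem \ref{thm:2010} then delivers the claimed sum-rate. The main obstacle I anticipate is verifying that the integrality condition needed in the transfer step follows automatically from $\hat{H}\in\CC[H]$ rather than being an ad-hoc extra hypothesis; this requires a careful trace through Definition \ref{relation} and the (forthcoming) Lemma \ref{eq_relation}. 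A secondary issue is the bookkeeping needed to ensure that the amplified $\hat{W}_{\max}^{\,T}$ still fits inside the lattice modulus $q$ from Step 1 of the Central-Dogma; because $q$ scales with SNR in the Construction-A construction, this is not a fundamental obstruction but does pin down how $T$ must scale with SNR in the final asymptotic statement.
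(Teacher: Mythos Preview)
Your proposal is correct and follows essentially the same route as the paper: the paper's Theorem~\ref{eq_class} is precisely your ``transfer via $\C_i=r_i\hat{\C}_i$ followed by radix-$\hat{W}_{\max}$ layering,'' and the paper then plugs the resulting layered code into the Construction-A lattice scheme (re-deriving the content of Theorem~\ref{thm:2010}) exactly as you outline. Two minor corrections to your anticipated obstacles: the integrality worry is moot, since $(HX)_i$ is automatically an integer and equals $d_i(\hat{H}\hat{X})_i$, so dividing by the known rational $d_i$ always returns the integer $(\hat{H}\hat{X})_i$ without any divisibility hypothesis on the rows of $\hat{H}$; and $q$ scales with the lattice dimension $n$ (via $q\le (P/Z)^{n/2}$), not with SNR, so $T$ must grow with $n$ rather than with SNR---this is exactly the coupling the paper enforces in~(\ref{l,n}).
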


We present the proof of the Theorem \ref{integer_GIFC} in the Section \ref{sec:IGIFC}.  
Example \ref{ex2} in Section \ref{sec:ddc} illustrates that, in general, $\ef(\CC[H])>\ef(\C_H)$.

%We note that this theorem is out of interest for the cases that $\ef(\CC[H])>1$, since in these cases, one can beat simple timesharing using coding proposed though out this work for the channels utilizing high enough SNRs.

%\begin{rem}
%One should note that Theorem \ref{integer_GIFC} provides an achievable scheme for any SNR and not just an asymptotic result for very high SNRs.
%\end{rem}

Again, the important point to note is that this is a result for any SNR and is not asymptotic in SNR. An immediate result from Theorem \ref{integer_GIFC} is an achievable total degrees of freedom as stated below:

\begin{cor}
\label{dof}
For the Gaussian interference channel with an integer channel matrix $H$, the following is an achievable total degrees of freedom:
\[DoF=\ef(\CC[H]).\]
\end{cor}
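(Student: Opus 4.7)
The plan is to derive Corollary \ref{dof} as a direct consequence of Theorem \ref{integer_GIFC} by plugging the achievable sum-rate expression into the definition of total degrees of freedom given in Section \ref{sec:prelim}. Since the corollary only asserts that $\ef(\CC[H])$ is an \emph{achievable} DoF (not a tight characterization), we only need to produce a coding strategy whose normalized sum-rate approaches $\ef(\CC[H])$ in the high-SNR limit.

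First, I would fix an arbitrary $\epsilon>0$ and invoke Theorem \ref{integer_GIFC}: under the symmetric power constraint $P_i=P$ and noise power $N=1$, there exists a coding strategy achieving
\[
\sum_{i=1}^K R_i \;=\; \tfrac{1}{2}\log(P/N)\cdot\bigl(\ef(\CC[H])-\epsilon\bigr).
\]
Next, I would compute the denominator in the TDoF definition: with $P_i=P$ for every $i$, we have $SP=\tfrac{1}{2}\log\bigl(\sum_i P_i\bigr)=\tfrac{1}{2}\log(KP)=\tfrac{1}{2}\log K+\tfrac{1}{2}\log P$. Since $\sup$ over all coding strategies dominates any particular one, I would lower bound
\[
\frac{\sup_{\text{strategies}}\sum_{j=1}^K R_j}{SP}\;\ge\;\frac{\tfrac{1}{2}(\log P-\log N)\bigl(\ef(\CC[H])-\epsilon\bigr)}{\tfrac{1}{2}\log K+\tfrac{1}{2}\log P}.
\]
Taking $P\to\infty$ (equivalently $SP\to\infty$), the additive constants $\log N$ and $\log K$ are absorbed and the ratio tends to $\ef(\CC[H])-\epsilon$.

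Finally, since $\epsilon>0$ was arbitrary, I would conclude that the $\limsup$ of the ratio is at least $\ef(\CC[H])$, which is precisely the definition of achievability for the total degrees of freedom. There is no real obstacle in this argument; the only mildly subtle point is that $\epsilon$ sits outside the limit in Theorem \ref{integer_GIFC}, so I would justify taking $\epsilon\to 0$ after the limit in $P$ by observing that for each $\epsilon$ the $\liminf$ (or achievable DoF) is at least $\ef(\CC[H])-\epsilon$, hence the supremum over $\epsilon$ yields $\ef(\CC[H])$. This completes the proof skeleton; the whole argument is essentially one display of algebra plus the definition of TDoF.
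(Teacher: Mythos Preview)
Your proposal is correct and matches the paper's approach: the paper states that Corollary~\ref{dof} is ``an immediate result from Theorem~\ref{integer_GIFC}'' and gives no separate proof, so the explicit limit computation you outline is exactly the intended argument. The only thing you add beyond the paper is the careful handling of the additive $\log K$ term and the order of $\epsilon\to 0$ versus $P\to\infty$, both of which are routine and pose no difficulty.
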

%\begin{proof}
%It follows directly from Theorem \ref{integer_GIFC}, and definition of total degrees of freedom, where $\epsilon$ goes to zero.
%\end{proof}

Next, we state a theorem for a general G-IFC without an integer-valued channel matrix. The approach here is to quantize the channel to an integer and employ {\em dithers} to render the error in quantization independent of the desired signal. The scheme for doing so is presented in Section \ref{sec:RGIFC}. % The idea here is in addition to use steps describe in the beginning of this section, to approximate the real channel with the integer channel, and the challenge is it to make the integer part of the signals independent from the remainder of it. This issue can be solved using dither at each transmitter. The complete scheme is present in the Section \ref{sec:RGIFC}.

%Results for this case summarize in the next theorem.
\begin{thm}
\label{thm_RGIFC}
Let $H \in \R^{K \times K}$ be the channel matrix and $\hat{H}=\lfloor H\rfloor$ be a matrix where each entry is the floor of the corresponding entry in $H$.  Let $\Hd=H-\hat{H}$ be the ``difference" matrix, and
\[H_{dmax}=\max_i\left\{\sum_{j=1}^K \Hd(i,j)^2\right\}.\]
Let $Z_{add}=P H_{dmax}+N$.
For any $\epsilon>0$, the following sum-rate is achievable:

\[\sum_{i=1}^K R_i=\frac{1}{2}\log\left(\frac{P}{Z_{add}}\right) \left(\ef(\CC[\hat{H}])-\epsilon\right).\]
%where $\hat{\C}$ is a decodable code for the integer channel $\hat{H}$ satisfying conditions given in Theorem \ref{integer_GIFC}.
\end{thm}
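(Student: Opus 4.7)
The plan is to reduce the real-gain G-IFC to the integer-gain setting of Theorem \ref{integer_GIFC} by absorbing the fractional part of the channel matrix into the noise. Writing $H = \hat{H} + \Hd$, the signal at receiver $i$ decomposes as
\[
Y_i = \sum_{j=1}^K \hat{H}(i,j)\, X_j + \tilde{N}_i, \qquad \tilde{N}_i \triangleq \sum_{j=1}^K \Hd(i,j)\, X_j + N_i,
\]
which exhibits a G-IFC with integer channel matrix $\hat{H}$ and effective noise $\tilde{N}_i$ of variance at most $P \sum_j \Hd(i,j)^2 + N \le P H_{dmax} + N = Z_{add}$.

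The obstruction is that $\tilde{N}_i$ is neither Gaussian nor independent of the transmitted codewords. I would resolve this by Erez--Zamir style lattice dithering: each transmitter $j$ draws an independent dither $U_j$ uniform over the coarse Voronoi region $\V_c$, revealed to every receiver, and sends $\tilde{X}_j = (X_j - U_j) \bmod \Lambda_c$. The Crypto Lemma renders $\tilde{X}_j$ uniform on $\V_c$ and statistically independent of the message-bearing $X_j$, so after each receiver compensates for the appropriate integer combination of dithers, the residual fractional-channel term $\sum_j \Hd(i,j)\, \tilde{X}_j$ appears as additive noise that is independent of every individual codeword and has variance at most $P H_{dmax}$.

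Having reduced the channel to an integer G-IFC $\hat{H}$ corrupted by signal-independent additive noise of total variance at most $Z_{add}$, I would invoke the construction used in the proof of Theorem \ref{integer_GIFC}, only redimensioning the nested lattice pair $(\Lambda_c, \Lambda_f)$ against noise variance $Z_{add}$ in place of $N$: $\Lambda_f$ is chosen good for channel coding at this inflated level, while $\Lambda_c$ remains good for quantization with second moment $P$. The achievable sum-rate expression of Theorem \ref{integer_GIFC} then picks up the factor $\tfrac{1}{2}\log(P/Z_{add})$ in place of $\tfrac{1}{2}\log(P/N)$, multiplied by $\ef(\CC[\hat{H}]) - \epsilon$, giving exactly the stated bound. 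The main subtlety I expect to handle carefully is lattice decoding under the \emph{non-Gaussian} residual noise; I would invoke the standard ``Gaussian is the worst case for lattice decoding'' principle (Erez--Zamir, following Lapidoth), which guarantees that a lattice good for AWGN of variance $Z_{add}$ still decodes reliably under any signal-independent noise of variance no larger than $Z_{add}$, absorbing the distributional mismatch into the $\epsilon$ slack.
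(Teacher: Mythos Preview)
Your proposal is correct and follows essentially the same route as the paper: dither the transmitted lattice codewords so that the fractional-channel contribution $\sum_j \Hd(i,j)\,x_j$ becomes independent of the message-bearing $X_j$'s, then rerun the integer-channel construction of Theorem~\ref{integer_GIFC} with the noise variance inflated from $N$ to $Z_{add}$. The only cosmetic difference is that the paper uses a single common dither $U$ shared by all transmitters (so the receiver adds back $\sum_j \hat{H}(i,j)\,U$ before reducing mod $\Lambda_c$), whereas you use independent per-user dithers; your explicit appeal to the ``Gaussian is worst case'' lattice-decoding argument is also a point the paper leaves implicit.
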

We prove Theorem \ref{thm_RGIFC} in Section \ref{sec:RGIFC}.

We can further extend our achievable schemes for a more general class of G-IFC. 
%Specifically, assumption on the channel matrix given by (h1) and (h2) in Section \ref{sec:model} can be relaxed. 
Consider a general G-IFC as:
\eq{\label{model2} Y=HX+N,} with power constraint $P_i$ for transmitter $i$, noise variance $N_i$ at receiver $i$ and channel matrix $H$. For a given positive real numbers $P$ and $N$, let $\vec{d}_1(N)=\left(\sqrt{\frac{N_1}{N}},...,\sqrt{\frac{N_K}{N}}\right)$, $\vec{d}_2(P)=\left(\sqrt{\frac{P}{P_1}},...,\sqrt{\frac{P}{P_K}}\right)$, and $H(N,P)$ be defined as:
\[H(N,P)=\diag(\vec{d}_1(N))^{-1}H\diag(\vec{d}_2(P)).\]
It is easy to check that the interference channel model as discussed above is equivalent to an interference channel matrix $H'$ with power constraint $P$ for all the transmitters and noise power $N$ at all the receivers. For this equivalent channel the following remark holds true.
\begin{rem}
\label{thm_IFC}
For all positive $P$ and $N$, Let $\hat{H}(N,P)$, $H(N,P)_{dmax}$ and $Z_{add}$ be defined similar to Theorem \ref{thm_RGIFC} from matrix $H(N,P)$, $P$ and $N$. 
For any $\epsilon>0$, the following sum-rate is achievable for the channel model given by Equation (\ref{model2}):

\[\sum_{i=1}^K R_i=\frac{1}{2}\log\left(\frac{P}{Z_{add}}\right) \left(\ef(\CC[\hat{H}(N,P)])-\epsilon\right).\]
\end{rem}
\begin{proof}
The proof is similar to the proof of Theorem (\ref{thm_RGIFC}). Note that the only condition we use in the proof is the integer channel matrix $\hat{H}$. 
%)Other constraints enforces by (h1) and (h2) in the channel model can be removed.
\end{proof}

Note that for the finite SNR, we are able to maximize the achievable sum-rate from Theorem \ref{thm_IFC} on different $P,N>0$. Moreover, if there exists a pair of $P,N>$ such that $H(N,P)\in \Z^{K\times K}$, the following is an achievable total degrees of freedom for this channel:
\[\ef(\CC[\hat{H}(N,P)]).\]

%we can scale each signal $X_i$ to $\hat{X}_i=\sqrt{\frac{P}{P_i}}X_i$ and the received signal $\hat{Y}_i=\frac{Y_i}{\sqrt{N_i}}$. The equivalent channel from $\hat{X}$ to $\hat{Y}$ can be written as:
%\[\hat{Y}=\hat{H}\hat{X},\]
%and 

In the next section, we analyze the scalar DD-IFC channel in detail and find achievable efficiencies for an integer-valued channel $H$. We also present some upper-bounds on efficiency for a particular $H$. Note that, in our analysis, we let $q$ grow to infinity. Note that this {\bf does not} correspond to SNR increasing to infinity in the original G-IFC, and that a suitable scaling of lattices always yields a  nested pair at any SNR even though $q \rightarrow \infty$ \cite{Loeliger1997,Erez2005}. 

\section{Efficiency of Scalar DD-IFCs}
\label{sec:ddc}

In this section we use scalar DD-IFCs as defined in the Equation (\ref{dd-ifc:model}). 
%Instead of power constraint, we assume $X \in \Z_m^K$ for some $m$. 
%In this section, we do not consider constraints (h1) and (h2).  We assume $H\in \Z^{K\times K}$. 
Equivalently, one can rewrite  Equation (\ref{dd-ifc:model}) for Receiver $i$ as:
\begin{equation}
\label{dmodel}
Y_i=\sum_{j=1}^K H(i,j) X_j,
\end{equation}
where $X_j \in \Z$. 

Note that, as it is stated in the last section, to maintain tractability in analysis of efficiency for this system, we restrict ourselves to scalar codebooks, i.e., where Transmitter $i$ sends a codeword $X_i \in \C_i$, where $\C_i \subset \Z$.  Receiver $Y_i$ can decode the intended  message from Transmitter $i$ if and only if there exists a function $g_i(.):\Z \rightarrow \C_i$, where: 
\eq{\label{gfunc}g_i(Y_i)=X_i.}

We refer to the set $\C=(\C_1,\C_2,...,\C_K)$ as a codebook if by using sets $\C_i$ at the Transmitter $i$, $Y_i$ can always successfully decode $X_i$ for all $i$.

Define the Set $\set_i$ as 
\[\set_i=\bigoplus_{j \neq i} H(i,j) \cdot \C_j,\]
where $\oplus$ represents the Minkowski sum of sets. It is defined as adding every element of one set to all the element of the other set. The following lemma provides a lower bound on the cardinality of $\set_i$.
\begin{lem}
\label{lowerboundon_set}
\[|\set_i| \ge \sum_{j \neq i} |\C_j| -K+2 \]
\end{lem}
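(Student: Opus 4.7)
The plan is to reduce the statement to the classical sumset inequality over the integers, namely that for any two finite nonempty sets $A,B\subset\Z$ one has $|A+B|\ge |A|+|B|-1$, and then iterate it $K-2$ times over the $K-1$ scaled codebooks.

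First I would observe that, since each cross coefficient $H(i,j)$ (for $j\neq i$) is a nonzero integer, the map $x\mapsto H(i,j)\,x$ is a bijection on $\Z$, so $|H(i,j)\cdot \C_j|=|\C_j|$. Thus $\set_i$ is a Minkowski sum of $K-1$ finite subsets of $\Z$ whose cardinalities are exactly $|\C_j|$ for $j\neq i$.

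Second, I would establish the base inequality by a direct ordering argument. Writing $A=\{a_1<\dots<a_m\}$ and $B=\{b_1<\dots<b_n\}$, the chain
\[
a_1+b_1<a_1+b_2<\dots<a_1+b_n<a_2+b_n<\dots<a_m+b_n
\]
exhibits $m+n-1$ strictly increasing (hence distinct) elements of $A+B$, proving $|A+B|\ge |A|+|B|-1$.

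Third, I would apply this inequality inductively. Combining the first two scaled codebooks loses at most $1$ relative to the sum of their cardinalities; each subsequent merge loses at most another $1$; so after adding all $K-1$ sets one loses at most $K-2$, giving
\[
|\set_i|\;\ge\;\sum_{j\neq i}|H(i,j)\cdot \C_j|\;-\;(K-2)\;=\;\sum_{j\neq i}|\C_j|\;-\;K+2,
\]
which is the claim. There is no genuine obstacle here; the only subtle point worth flagging is that the argument tacitly uses $H(i,j)\neq 0$ for all $j\neq i$ (otherwise that summand collapses to $\{0\}$ and contributes only $1$ rather than $|\C_j|$ to the sumset), which is the standard regime of interest for the interference channels studied in the paper.
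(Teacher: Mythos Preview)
Your proof is correct and follows essentially the same approach as the paper: establish the two-set inequality $|A+B|\ge |A|+|B|-1$ via an ordering/extremal argument and then iterate it over the $K-1$ summands. You are in fact slightly more careful than the paper in explicitly noting that the scaling by the nonzero integer $H(i,j)$ preserves cardinality, a point the paper leaves implicit.
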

A proof for $K=3$ is given in \cite{HV06}. Here we prove it for a general case:

\begin{proof}
We first prove that $|\C_j \oplus \C_l| \ge |\C_j|+|\C_l|-1$. 
Let $C_{j,M}$ and $C_{l,m}$ be the maximum and the minimum elements of sets $\C_j, \C_l$, respectively.
%Let $C_{i,M}$ and $C_{i,m}$ be the maximum and the minimum element of $C_i$. Without loss of generality assume
%\[C_{l,M}+C{j,m} \ge C_{j,M}+C_{l,m}.\]

Let $\set_1=\C_j \oplus C_{l,m}$ and $\set_2=\C_l \oplus C_{j,M}$. Note that  $\set_1 \cap \set_2= \{C_{j,M}+C_{l,m}\}$. Furthermore, $|\set_1|=|\C_j|$, $|\set_2|=|\C_l|$, and $\set_1,\set_2 \subset \C_j\oplus \C_l$. Therefore, $|\C_j\oplus \C_l| \ge |\set_1|+|\set_2|-1= |\C_j|+|\C_l|-1$.

 Using induction and the statement above, one can obtain the desired result.
\end{proof}

Let $\Y_i$ be the set of all  possible received signals $Y_i$ when $X_j \in \C_j$ for all $j$'s. Let $W_i$ be the maximum of $\Y_i$, and $\W=\max_i{W_i}+1$. Note that, another way to define $\W$ is as given in Equation (\ref{maxh}).

Next, we present a lemma that gives us a necessary condition for decodability  at each receiver.
\begin{lem}
\label{DDC_decodability}
There exists a scheme to decode $X_i$ from $Y_i$ or equivalently there exists a function $g(.)$, satisfying Equation (\ref{gfunc}), if and only if the following holds:

\[|\Y_i|=|\C_i||\set_i|.\] 
\end{lem}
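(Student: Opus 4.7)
The plan is to rephrase decodability as injectivity of a natural map and then use a basic cardinality argument. Write $Y_i = H(i,i)X_i + S$ where $S = \sum_{j\neq i} H(i,j) X_j$, and consider the map
\[
\phi_i : \mathcal{C}_i \times \mathcal{S}_i \longrightarrow \mathcal{Y}_i,\qquad \phi_i(X_i,S) = H(i,i)X_i + S.
\]
By the definitions of $\mathcal{S}_i$ and $\mathcal{Y}_i$, the map $\phi_i$ is surjective, so trivially $|\mathcal{Y}_i|\le|\mathcal{C}_i|\,|\mathcal{S}_i|$ with equality iff $\phi_i$ is a bijection.

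For the ``if'' direction, I would assume $|\mathcal{Y}_i| = |\mathcal{C}_i|\,|\mathcal{S}_i|$, conclude that the surjection $\phi_i$ is in fact a bijection, and then define the decoder $g_i$ by picking the unique first coordinate of $\phi_i^{-1}(Y_i)$; this obviously satisfies $g_i(Y_i) = X_i$.

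For the ``only if'' direction, I would argue by contrapositive: suppose $|\mathcal{Y}_i| < |\mathcal{C}_i|\,|\mathcal{S}_i|$. Then $\phi_i$ is not injective, so there exist $(X_i,S) \neq (X_i',S')$ with $H(i,i)X_i + S = H(i,i)X_i' + S'$. If $X_i = X_i'$ then $S = S'$, contradicting the pairs being distinct; hence $X_i \neq X_i'$, and the two transmit vectors producing these identical outputs yield the same $Y_i$ but different intended messages, so no well-defined $g_i$ satisfying \eqref{gfunc} can exist.

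There is no real obstacle here beyond being careful about the equivalence between ``decodability of $X_i$'' and ``full injectivity of $\phi_i$'' (rather than mere injectivity in the first coordinate), which is immediate from the observation that $H(i,i)x+s = H(i,i)x+s'$ forces $s=s'$.
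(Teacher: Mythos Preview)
Your proposal is correct and follows essentially the same argument as the paper: both show the surjection $(X_i,S)\mapsto H(i,i)X_i+S$ is a bijection exactly when the cardinalities match, and read off the decoder (or its nonexistence) from that. If anything, you are slightly more careful than the paper in explicitly ruling out the case $X_i=X_i'$ with $S\neq S'$ in the non-injective situation, which the paper leaves implicit.
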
  
\begin{proof}
We first establish the ``only-if" statement.
Since $\Y_i=\C_i \oplus \set_i$, we know: \eq{\label{ineq1} |\Y_i|\le |\C_i||\set_i| }
We prove the lemma by contradiction. Assume Inequality (\ref{ineq1}) holds strictly, so there are two pairs of $(X_i^1, S_i^1), (X_i^2,S_i^2)$ such that $(X_i^j,S_i^j) \in \C_i \times \set_i$, and:
\[X_i^1+ S_i^1=X_i^2+S_i^2.\]
This clearly implies that Receiver $i$ cannot decode successfully.

The ``if" statement can be proved fairly easily. Assume  $|\Y_i|=|\C_i||\set_i|$, therefore for each $Y_i \in \Y_i$, there exits a unique $(X_i,S_i) \in \C_i \times \set_i$ such that $Y_i=H(i,i) X_i+S_i$. For this choice of $Y_i$, let $g_i(Y_i)=X_i$. This defines the decoding function  $g_i()$.
\end{proof}

Given this framework,  we present a set of achievable schemes for integer-valued scalar DD-IFCs and compare the resulting efficiencies.

The following theorem presents upper and lower bounds on efficiency (as defined in Definition \ref{def:ef}).
\begin{thm}
\label{bounds_ef}
For a channel given by Equation (\ref{dd-ifc:model}) with an integer-valued channel matrix $H$, the following hold true:

\begin{enumerate}
\item  For any $\epsilon>0$, there exists a set of codebooks $\C=(\C_1,\C_2,...,\C_K)$, letting only one user transmits, where \[\ef(\C) > 1-\epsilon.\]
\item For any set of codebooks $\C$ , \[\ef(\C) < \frac{K}{2} \]
\end{enumerate}
\end{thm}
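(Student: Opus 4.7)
My plan is to treat the two bounds separately, both drawing on Lemmas~\ref{lowerboundon_set} and~\ref{DDC_decodability}.

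For Part~1, I would silence all but one transmitter. Fix any index $i^\ast$ and set $\hat{\C}_{i^\ast} = \{0, 1, \ldots, M-1\}$ with $\hat{\C}_j = \{0\}$ for $j \neq i^\ast$. Every receiver observes $Y_i = H(i, i^\ast) X_{i^\ast}$, which the intended receiver decodes by trivial scalar division while the other receivers have nothing to decode. A direct calculation gives $\W = (M-1)\,h^\ast + 1$ with $h^\ast := \max_i |H(i,i^\ast)|$, so
\[
\ef(\hat{\C}) \;=\; \frac{\log M}{\log \W} \;\geq\; \frac{\log M}{\log M + \log h^\ast} \;\longrightarrow\; 1 \quad \text{as } M \to \infty.
\]
Choosing $M$ sufficiently large pushes this above $1-\epsilon$.

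For Part~2, I would chain the two lemmas. Since scalar codebook entries may be taken in $\{0, 1, \ldots, q-1\}$, each $\Y_i \subseteq \{0, \ldots, \W-1\}$ and hence $|\Y_i| \leq \W$. Lemma~\ref{DDC_decodability} then forces $|\C_i|\cdot|\set_i| \leq \W$, while Lemma~\ref{lowerboundon_set} gives $|\set_i| \geq \sum_{j \neq i} |\C_j| - K + 2$. Combining these and invoking AM-GM on the interference sum, $\sum_{j \neq i}|\C_j| \geq (K-1)\bigl(\prod_{j \neq i}|\C_j|\bigr)^{1/(K-1)}$, converts the bound at receiver $i$ into
\[
\log|\C_i| + \log(K-1) + \frac{1}{K-1}\sum_{j \neq i}\log|\C_j| \;\leq\; \log\W,
\]
modulo a small correction absorbing the $-K+2$ shift. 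Summing over $i \in \{1, \ldots, K\}$ and observing that each $\log|\C_j|$ appears exactly $K-1$ times in the double sum yields
\[
2\sum_{i=1}^{K}\log|\C_i| \;\leq\; K\log\W - K\log(K-1),
\]
so dividing by $\log\W$ produces
\[
\ef(\C) \;\leq\; \frac{K}{2}\left(1 - \frac{\log(K-1)}{\log\W}\right) \;<\; \frac{K}{2},
\]
the strict slack coming from $\log(K-1)>0$ once $K \geq 3$.

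The main technical nuisance is the additive $-K+2$ correction in Lemma~\ref{lowerboundon_set}: for very small codebooks it is not negligible relative to $\sum_{j \neq i}|\C_j|$, so the AM-GM step requires a short separate regime (where efficiency is trivially far below $K/2$). Once the $|\C_i|$ grow with $\W$ the correction becomes asymptotically immaterial and the clean chain above delivers the desired bound.
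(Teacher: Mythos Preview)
Your Part~1 is essentially the paper's argument: silence all but one user, let that user's alphabet size grow, and observe that $\log M/\log\W\to 1$.

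For Part~2 your route differs from the paper's and, as written, has a real gap. The paper does \emph{not} sum the decodability constraint over all receivers and then apply AM--GM; instead it orders the sizes $|\C_1|\ge|\C_2|\ge\cdots\ge|\C_K|$ and uses only Receiver~1. The payoff is that the troublesome $-K+2$ from Lemma~\ref{lowerboundon_set} is absorbed \emph{exactly}: since each $|\C_j|\ge 1$,
\[
|\set_1|\;\ge\;\sum_{j\ge 2}|\C_j|-(K-2)\;=\;|\C_2|+\sum_{j\ge 3}\bigl(|\C_j|-1\bigr)\;\ge\;|\C_2|,
\]
giving the clean product bound $\W>|\C_1|\,|\C_2|$ with no leftover correction. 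Pairing this with $\prod_i|\C_i|\le |\C_1|\,|\C_2|^{K-1}$ yields
\[
\ef(\C)\;\le\;\frac{\log|\C_1|+(K-1)\log|\C_2|}{\log|\C_1|+\log|\C_2|}\;\le\;\frac{K}{2},
\]
and strictness is obtained by tracing the equality cases (they force all $|\C_i|=1$, whence $\ef=0$).

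The specific problem with your sketch is the sentence ``once the $|\C_i|$ grow with $\W$ the correction becomes asymptotically immaterial.'' The theorem is a bound on every \emph{fixed} codebook; there is no asymptotic regime to invoke, so this does not dispose of the $-K+2$ term. A genuine fix would require an explicit case split (for instance: if some $|\C_i|=1$ discard that user; if all $|\C_i|\ge 2$ then $\sum_{j\ne i}|\C_j|-(K-2)\ge\tfrac12\sum_{j\ne i}|\C_j|$, which salvages the AM--GM step at the cost of a weaker constant). Note also that your slack $\log(K-1)$ vanishes at $K=2$, so that case still needs the kind of equality-tracing argument the paper uses.
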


\begin{proof}
\begin{enumerate}
\item 
For the first part, consider the following codebooks:
\begin{align*}
\C_1&=\{0,1,2,...,T\} \ \ \ \textrm{ for some $T>1$}\\
\C_i&=\{0\}  \ \ \ \ \ \forall i, \ 2 \le i \le K
\end{align*}

For Receiver 1, $X_1=\frac{Y_1}{H(i,i)}$ and for all other receivers, $X_i=0$ is the transmitted codeword, and thus it is a valid achievable scheme.
Let $H_m=max(H)$ be the maximum channel gain. One can check that in this case $\W\le H_m \cdot T+1$. For this achievable scheme, efficiency can be computed as the following:
\[\ef(\C)=\frac{\log(T)}{\log(H_m \cdot T+1)}.\]

The efficiency calculated above goes to one as $T$ goes to infinity. Equivalently, we are able to choose a large enough $T_\epsilon$ that the efficiency of the resulted codebook is greater than $1-\epsilon$.

\item
For the upper bound, we assume, without loss of generality, 

\[|\C_1| \ge |\C_2| \ge ... \ge |\C_K|.\]

%Here, we use the inequality given by Equation (\ref{ef:up}) which gives an upperbound on the efficiency as defined in Definition \ref{def:ef}.
From Lemma \ref{DDC_decodability}, one can infer that $W_i \ge |\Y_i| = |\C_i| \cdot |\set_i|$. Also using Lemma \ref{lowerboundon_set}, we know that:
\[|\set_i|\ge \sum_{j \neq i} |\C_j|-K+2.\]

Given that $|\C_j|\ge 1$ for all $j$'s,  
\eq{\label{W_upper}\W > W_1 \ge |\C_1| \cdot  |\C_2|,}
where equality holds if $|\C_l|=1$ for all $l\ge 3$.

On the other hand, $\displaystyle{\prod_{i=1}^K |\C_i| \le |\C_1| \cdot |\C_2|^{K-1}}$, with equality only when $|\C_2|=|\C_j|$ for all $j\ge3$. So, one can write:
\begin{align}
\ef(\C) &\le \frac{\log \left(|\C_1| \cdot |\C_2|^{K-1}\right)}{\log\left(|\C_1| \cdot  |\C_2|\right)} \nonumber\\
&=\frac{\log |\C_1|+(K-1)\cdot \log|\C_2|}{\log|\C_1|+\log|\C_2|} \nonumber\\
&\le \frac{K}{2} \label{eq:finalineq},
\end{align}
where (\ref{eq:finalineq}) holds with equality only if $|\C_1|=|\C_2|$. Note that in order to satisfy $\ef(\C)$ to equal $\frac{K}{2}$, all the inequalities in the analysis above must hold with equality or equivalently, we must have that $|\C_i|=1$ for all $ 1\le i \le K$. But for that case, $\ef(\C)=0$. Thus, we conclude that the Inequality (\ref{eq:finalineq}) is always a strict inequality. 
\end{enumerate}
\end{proof}

%As explained previously, we desire to obtain efficiencies that are higher than one. 
 
%Next, we present achievable efficiencies that are greater than $1$. 
Note that, in the lattice scheme given by Equations (\ref{eqn:construct1}), (\ref{eqn:construct2}), we let $n$ and $q$ goes to infinity such that $\frac{1}{n}\log(q)$ remains constant to keep the rate of lattice codebook constant. This means that for larger $q$'s we need codebooks $(\C_1,\C_2,...,\C_K)$ with  higher (exponentially growing) rates satisfying Inequality (\ref{maxh}). This, however, can be done using layered codebooks. The idea is to construct codebooks with higher rate using a set of primary codebooks $(\C_1,\C_2,...,\C_k)$ multiple times. This scheme can be suboptimal in general, but we show this is enough to achieve non-trivial rates for G-IFC. For the scalar DD-IFC as given in Equations (\ref{dd-ifc:model}) and (\ref{maxh}), we construct the layered codebook $\C^l_i$ from the primary codebook $\C_i$ as the following:
 
\eq{\label{layered_eq} {\C}^l_{i}=\left\{\sum_{v=0}^{l-1} W^v  m_v | m_v \in \C_i \right\},}
for some positive integers $W$ and $l$. We call $W$ the ``bin size" of the layered code. This codebook construction first proposed for the degrees of freedom of symmetric channels in \cite{Motahari2009}. Here we improve it for non-symmetric channels by choosing a more intelligent $W$. 

\begin{defin}
For the layered codebook $\C^l$ we define ``asymptotic efficiency" as the following:
\[\aef(\C)=\lim_{l \rightarrow \infty} \ef(\C^l).\] 
\end{defin}
Note that $\aef(\C)$ is a function of $W$ too, but to simplify the notation we remove $W$ from the expression.
In this section, we show the existence of an appropriate $W$ that renders $\C_i^l$ a decodable code while achieving high $\aef(\C)$. 

\begin{defin}
\label{def:goodness}
A code $\C$ is called a ``good" code if there exists an appropriate bin size $W$ such that the layered code $\C^l$ constructed using the primary codebook $\C$ satisfies the following two conditions:
\begin{enumerate}
\item $\tilde{\C}_l$ is decodable
\item For large enough $l$'s, $\ef(\C^l)>1$, or equivalently:
\[\aef(\C)>1.\]
\end{enumerate}
\end{defin}

Knowing the cardinality of the layered codebooks that is constructed based on a ``good" code is essential. We obtain it in the next lemma.

\begin{lem}
\label{layared_card}
Let $\C$ be a ``good" codebook, and its corresponding layered codebook be $\C^l$. Then $|\C^l|=|\C|^l.$
\end{lem}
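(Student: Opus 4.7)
The plan is to prove, for each receiver index $i$, that the representation map
\[ \phi_i:\C_i^l\to\C^l_i,\qquad \phi_i(m_0,\dots,m_{l-1})=\sum_{v=0}^{l-1} W^v m_v,\]
is injective. Since $|\C^l_i|\le|\C_i|^l$ is trivial, injectivity yields $|\C^l_i|=|\C_i|^l$, and taking the product over $i$ gives the claim $|\C^l|=|\C|^l$. The whole argument reduces to one observation about the ``appropriate'' bin size promised by Definition~\ref{def:goodness}: it must satisfy $W\ge\W$, where $\W$ is the alphabet bound for the primary codebook $\C$.

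To see why a good $W$ satisfies $W\ge\W$, I will use the fact that the layered received signal at Receiver $i$ splits cleanly as
\[ Y_i^l \;=\; \sum_{v=0}^{l-1} W^v\, Y_i^{(v)}, \qquad Y_i^{(v)} \;=\; \sum_{j=1}^K H(i,j)\, m_v^{(j)} \;\in\; \{0,1,\dots,\W-1\}. \]
If $W\ge\W$, the integers $Y_i^{(v)}$ are literally the base-$W$ digits of $Y_i^l$ and so are uniquely recovered from $Y_i^l$; applying the primary decoder for $\C$ to each $Y_i^{(v)}$ then extracts every $m_v^{(i)}$, verifying decodability of $\C^l$. Conversely, when $W<\W$, carries between layers cause distinct message tuples to collapse to the same $Y_i^l$, and a direct count through Lemma~\ref{DDC_decodability} shows $|\Y_i^l|<|\C^l_i|\cdot|\set_i^l|$, ruling out decodability altogether. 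Hence any bin size witnessing goodness of $\C$ must satisfy $W\ge\W$.

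With $W\ge\W$ in hand, injectivity of $\phi_i$ is a routine base-$W$ uniqueness argument. Since $\W-1\ge W_i\ge H(i,i)\max\C_i\ge\max\C_i$, every coordinate $m_v$ of a tuple in $\C_i^l$ lies in $\{0,1,\dots,W-1\}$. If $\sum_v W^v m_v=\sum_v W^v m'_v$ and $v^\ast$ is the smallest index with $m_{v^\ast}\ne m'_{v^\ast}$, then dividing through by $W^{v^\ast}$ and reducing modulo $W$ forces $m_{v^\ast}\equiv m'_{v^\ast}\pmod W$, contradicting $|m_{v^\ast}-m'_{v^\ast}|<W$.

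The main obstacle is the first step: the goodness hypothesis only guarantees the existence of \emph{some} decoder, and one must convert that into the concrete inequality $W\ge\W$. The cleanest route I see is the contrapositive through Lemma~\ref{DDC_decodability}: for $W<\W$, explicitly exhibit pairs of primary codewords whose base-$W$ combinations coincide because a layer-$v$ contribution overflows into layer $v+1$, and count these collisions to show $|\Y_i^l|<|\C^l_i|\cdot|\set_i^l|$. Once $W\ge\W$ is secured, the base-$W$ uniqueness step is mechanical.
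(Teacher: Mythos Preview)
Your central claim---that any bin size $W$ witnessing goodness must satisfy $W\ge\W$---is false, and the paper's own Example~\ref{ex1} (together with Example~\ref{ex2}) is a direct counterexample. There the primary codebook $\C$ for the channel $H$ has $\W=40$, yet the layered code built with $W=30$ is decodable and has asymptotic efficiency $\log(36)/\log(30)>1$; hence $\C$ is ``good'' with a bin size strictly smaller than $\W$. Indeed, the entire point of Theorem~\ref{eq_class} is that passing to an equivalent matrix $H'\in\CC[H]$ can produce a bin size smaller than $\W$. When $W<\W$ the per-layer outputs at some receiver do overflow, but those carries are structured (they are absorbed by the scaling factor $f_i$ in $\tilde Y=\diag(\vec f)\,Y$) and do \emph{not} destroy decodability. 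Your proposed contrapositive through Lemma~\ref{DDC_decodability} therefore cannot succeed: the collisions you anticipate in $\Y_i^l$ simply do not occur in this example.

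The paper's argument is much shorter and sidesteps any claim about the size of $W$. Decodability of $\C^l$ in the layered sense means Receiver~$i$ recovers the entire tuple $(m_0^{(i)},\dots,m_{l-1}^{(i)})$ of primary messages from $Y_i^l$ (this is exactly how the layered decoder is described in the proof of Theorem~\ref{eq_class}). Since $Y_i^l$ depends on the transmitted signals only through the codewords $X_j=\phi_j(m_0^{(j)},\dots,m_{l-1}^{(j)})$, recoverability of the tuple forces $\phi_i$ to be injective, and $|\C^l_i|=|\C_i|^l$ follows at once from Equation~(\ref{layered_eq}). If you want to retain your base-$W$ uniqueness step, note that it only requires the much weaker inequality $W>\max_j\max\C_j$, not $W\ge\W$; but you would still need to \emph{derive} that weaker bound from the decodability hypothesis rather than from the false intermediate claim.
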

\begin{proof}
The result follows from Equation (\ref{layered_eq}) and the decodability condition from Definition \ref{def:goodness}.
\end{proof}

An immediate choice for $W$ in Equation (\ref{layered_eq}) is $\W$ as defined in (\ref{maxh}). One can check that this choice for $W$ makes the resulting layered code $\C^l$ decodable \cite{Motahari2009}. Note that, a choice of $W$ that is less than this value in Equation (\ref{layered_eq}) results in higher $\aef(\C)$.
In this section, we show that there are, in general, better choices for $W$ than $\W$.  This fact can better understood from the following example.

\begin{ex}
\label{ex1}
Consider the following channel matrix:
\[H=\left[\begin{array}{ccc}
1 &4&3\\
2&1&3\\
6&2&1\end{array}
\right].\]

It is easy to check that the following  codes are decodable for this channel:
\begin{align*}
&\C_1=\{0,1,2,3,4,5\},\\
&\C_2=\{0,3\},\\
&\C_3=\{0,2,4\}.
\end{align*}

From the Equation (\ref{maxh}), we observe that $\W=40$. Let $W=\W$  in the layered code given in Equation (\ref{layered_eq}). We can compute the asymptotic efficiency as:
\[\aef(\C)=\frac{\log(36)}{\log(41)}\]

However, in Example \ref{ex2}, we show that using $W=30$ the resulting layered codebook is still decodable and we can achieve the following asymptotic efficiency:
\[\aef(\C)=\frac{\log(36)}{\log(30)}\]

Note that, although we start from a code with $\ef(\C)<1$,  a clever choice of $W$ in the layered coding scheme, results in another code $(\C_l)$ with efficiency greater than one for large enough $l$'s.  
\end{ex}

In order to present our main result for this new and more intelligently picked $W$, we must go back to the relation  in Definition \ref{relation} between matrices. First, we prove this relation is in fact an equivalence relation. 

\begin{lem}
\label{eq_relation}
The binary relation $\sim$ defined as in Definition \ref{relation} is an equivalence relation.
\end{lem}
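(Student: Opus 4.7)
The plan is to verify the three defining axioms of an equivalence relation---reflexivity, symmetry, and transitivity---in that order, and then to collect them into the conclusion.

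\emph{Reflexivity} is immediate: take $\vec{r}=\vec{d}=(1,1,\ldots,1)$. Then $\vec{r}$ is integer, $\vec{d}$ is rational, and $\diag(\vec{d}^{-1})\,H\,\diag(\vec{r}) = I\cdot H\cdot I = H$, so $H\sim H$.

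\emph{Transitivity} follows from the multiplicative structure of diagonal matrices. Suppose $H\sim H'$ is witnessed by $(\vec{r}_1,\vec{d}_1)$ and $H'\sim H''$ by $(\vec{r}_2,\vec{d}_2)$. Substituting one into the other gives
\[
H'' \;=\; \diag(\vec{d}_2^{-1})\,\diag(\vec{d}_1^{-1})\,H\,\diag(\vec{r}_1)\,\diag(\vec{r}_2).
\]
Diagonal matrices commute and $\diag(\vec{u})\diag(\vec{v})=\diag(\vec{u}\odot\vec{v})$ (componentwise product), so $H''=\diag\bigl((\vec{d}_1\odot\vec{d}_2)^{-1}\bigr)\,H\,\diag(\vec{r}_1\odot\vec{r}_2)$. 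The vector $\vec{d}_1\odot\vec{d}_2$ is rational and $\vec{r}_1\odot\vec{r}_2$ is integer, establishing $H\sim H''$.

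The main obstacle is \emph{symmetry}, because the definition treats the two sides asymmetrically (integer $\vec{r}$ versus rational $\vec{d}$). Starting from $H'=\diag(\vec{d}^{-1})\,H\,\diag(\vec{r})$, inversion gives $H=\diag(\vec{d})\,H'\,\diag(\vec{r}^{-1})$, so the natural witnesses $\vec{d}'=\vec{d}^{-1}$ (rational) and $\vec{r}'=\vec{r}^{-1}$ are fine for $\vec{d}'$ but fail to make $\vec{r}'$ integer. I would resolve this with a scale-invariance observation: for any nonzero rational $c$,
\[
\diag\bigl((c\vec{d}')^{-1}\bigr)\,H'\,\diag(c\vec{r}')
= c^{-1}\diag(\vec{d}'^{-1})\,H'\,c\,\diag(\vec{r}')
= \diag(\vec{d}'^{-1})\,H'\,\diag(\vec{r}'),
\]
so the witness pair can be uniformly rescaled without changing what it witnesses. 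Choosing $c=\prod_i r_i$ (which is a nonzero integer, since the relation is only meaningful for $r_i\neq 0$), the rescaled vector $c\vec{r}^{-1}$ has integer components $\prod_{j\neq i} r_j$, while $c\vec{d}^{-1}$ remains rational. Taking $\vec{r}'=c\vec{r}^{-1}$ and $\vec{d}'=c\vec{d}^{-1}$ therefore yields valid witnesses for $H'\sim H$.

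Combining the three parts completes the verification that $\sim$ is an equivalence relation. The only subtle step is symmetry, and there the sole nontrivial ingredient is the scale-invariance identity above, which turns the ``formal inverse'' into a bona fide witness of the required type.
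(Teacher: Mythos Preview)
Your proof is correct and matches the paper's argument essentially step for step: reflexivity via the all-ones vectors, transitivity via componentwise products of the witness vectors, and symmetry by inverting the relation and then clearing denominators with the scalar $c=\prod_i r_i$ so that the right-hand witness becomes the integer vector $\bigl(\prod_{j\neq i} r_j\bigr)_i$. The only cosmetic differences are that you present transitivity before symmetry and frame the rescaling as a ``scale-invariance'' observation, whereas the paper simply writes down the rescaled vectors $\vec{r}\,'$ and $\vec{d}\,'$ directly.
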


\begin{proof}
In order to prove this lemma, we need to show the following three properties of the relation $\sim$:

\begin{enumerate}
\item Reflexivity: $H\sim H$.
\item Symmetry: if $H\sim H'$, then $H' \sim H$. 
\item Transitivity: if $H \sim H'$ and $H' \sim H''$, then $H \sim H''$.
\end{enumerate}

First property follows by assigning $\vec{r}=\vec{d}=(1,1,...,1)$. 

For the second property, let $\diag(\vec{d}^{-1}) H \diag(\vec{r})=H'$. Let:
\begin{align*}
&\vec{r'}=\prod_{i=1}^K r_i \times \left(\frac{1}{r_1}, \frac{1}{r_2},...,\frac{1}{r_K}\right),\\
&\vec{d'}=\prod_{i=1}^K r_i \times \left(\frac{1}{d_1},\frac{1}{d_2},...,\frac{1}{d_K}\right).
\end{align*}
Note that $\vec{r'}$ is an integer vector and $\vec{d'}$ is a rational vector. We can write:

\begin{align*}
\diag(\vec{d'}^{-1}) H' \diag(\vec{r'})&=\left(\frac{1}{\displaystyle{\prod_{i=1}^K r_i}} \diag(\vec{d}) \right)\left[\diag(\vec{d}^{-1}) H \diag(\vec{r}) \right] \left(\diag(\vec{r}^{-1}) \prod_{i=1}^K r_i \right)=H,
\end{align*}
where the last steps follows from Remark \ref{diag_inv}.

For the third property, let $\diag(\vec{d}^{-1}) H \diag(\vec{r})=H'$ and $\diag(\vec{d'}^{-1}) H' \diag(\vec{r'})=H$. Thus,
\eq{\label{trans}
\diag(\vec{d'}^{-1}) \diag(\vec{d}^{-1}) H \diag(\vec{r})  \diag(\vec{r'})=H''.}

Let $\vec{r''}=(r_1 r'_1, r_2 r_2',...,r_K r_K')$ and $\vec{d''}=(d_1 d_1', d_2 d_2'',..., d_K d_K')$. It is easy to check that:
\begin{align*}
&\diag(\vec{d}^{-1}) \diag(\vec{d'}^{-1})=\diag(\vec{d''}^{-1}),\\
&\diag(\vec{r}) \diag(\vec{r'})=\diag(\vec{r''})
\end{align*}

This in addition to Equation (\ref{trans}) proves transitivity property.
This concludes the lemma.
\end{proof}

As a result of Lemma \ref{eq_relation}, we can partition the set of all integer matrices $\Z^{K\times K}$ into different classes.
%, such that in each class all the matrices are in the equivalence relation stated in Definition \ref{relation}. 
We denote the class of matrices that includes matrix $H$, by $\CC[H]$. Therefore, $H\sim H'$ if and only if $H'\in \CC[H]$. Equivalently, we can say $H \sim H'$ if and only if $\CC[H']=\CC[H]$.

%In the next theorem, we implicitly present a good choice for $W$ in the layered code as defined in Equation (\ref{layered_eq}) to achieve higher efficiencies.

The following theorem uses a good choice for $W$ to achieve a non-trivial efficiency.
\begin{thm}
\label{eq_class}
Let $H$ be an integer matrix and $\tilde{H}\in \CC[H]$. Let $\C$ be a  codebook for $H$ with efficiency $\ef(\C)$. There exists a layered decodable code $\tilde{\C}^l$ for channel matrix $\tilde{H}$ satisfying:
\[\aef(\tilde{\C})=\ef(\C).\]
\end{thm}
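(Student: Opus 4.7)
The plan is to transport the codebook $\C$ to a codebook for $\tilde{H}$ by exploiting the equivalence $\tilde{H} = \diag(\vec{d}^{-1})\,H\,\diag(\vec{r})$: scaling receiver $i$'s output by $d_i$ and transmitter $j$'s input by $r_j$ converts the $\tilde{H}$-channel back into the $H$-channel, so the natural attempt is to set $\tilde{\C}_j := r_j^{-1}\C_j$. This fails integrality whenever some element of $\C_j$ is not a multiple of $r_j$. The layered construction with bin size $W=\W$ repairs this by enforcing the divisibility $r_j \mid \text{codeword}$ on the \emph{whole} layered codeword at once, which only costs an $O(1)$ factor in $|\tilde{\C}_j^l|$ that is absorbed as $l \to \infty$.

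Concretely, fix $W:=\W$ from (\ref{maxh}) for $(H,\C)$, and for each $j$ set
\[
\tilde{\C}^l_j \;=\; \left\{\frac{1}{r_j}\sum_{v=0}^{l-1} W^v m_v \;:\; m_v\in \C_j,\; r_j \,\Big|\, \sum_{v=0}^{l-1} W^v m_v\right\}.
\]
Every codeword is an integer. Decodability follows by layered peeling. Writing $r_j\tilde{X}_j=\sum_v W^v m_{v,j}$ for every codeword, at receiver $i$
\[
d_i\,\tilde{Y}^l_i \;=\; \sum_j H(i,j)\,r_j\,\tilde{X}_j \;=\; \sum_{v=0}^{l-1} W^v\Bigl(\sum_j H(i,j)\,m_{v,j}\Bigr),
\]
and each inner sum $\sum_j H(i,j)\,m_{v,j}$ lies in $[0,\W)=[0,W)$ by the very definition of $W$. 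Receiver $i$ therefore recovers the inner sum at each layer by successive $\bmod\,W$ and integer division, after which the decodability of $\C$ for $H$ (Lemma \ref{DDC_decodability}) recovers $m_{v,i}$ at each layer $v$; reassembly yields $\tilde{X}_i=(1/r_i)\sum_v W^v m_{v,i}$, which is integer by the built-in constraint.

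A counting argument on the map $\phi:(m_0,\ldots,m_{l-1})\mapsto\sum_v W^v m_v\bmod r_j$ gives $|\tilde{\C}_j^l|\ge|\C_j|^l/r_j$, so $l^{-1}\log|\tilde{\C}_j^l|\to\log|\C_j|$. Letting $W'$ denote the $W_{\textrm{max}}$ for $\tilde{\C}^l$ on $\tilde{H}$, one computes $W' = \Theta(W^l/\min_i d_i)$, hence $\log W' = l\log W + O(1)$. Combining,
\[
\aef(\tilde{\C}) \;=\; \lim_{l\to\infty}\frac{\sum_j\log|\tilde{\C}_j^l|}{\log W'} \;=\; \frac{\sum_j\log|\C_j|}{\log W} \;=\; \frac{\sum_j\log|\C_j|}{\log\W} \;=\; \ef(\C).
\]
The main technical obstacle is justifying the lower bound $|\tilde{\C}_j^l|\gtrsim|\C_j|^l/r_j$ in the presence of the divisibility constraint: for $\C_j$ that hits all residues mod $r_j$ this is immediate from pigeonhole on the fibres of $\phi$, and in general a Fourier/character-sum estimate over $\Z/r_j$ shows the count still grows as $|\C_j|^l$ up to a constant factor, which does not affect $\aef$.
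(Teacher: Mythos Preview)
Your overall strategy---transport $\C$ through the equivalence and then layer---matches the paper's, but you run the equivalence in the wrong direction, and this creates an unnecessary difficulty that you then do not fully resolve.

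The paper exploits the \emph{symmetry} of $\sim$ (Lemma~\ref{eq_relation}) to write $H=\diag(\vec f^{\,-1})\,\tilde H\,\diag(\vec t)$ with $\vec t$ an \emph{integer} vector, and sets $\tilde\C_i:=t_i\,\C_i$. This is automatically an integer codebook, so no divisibility constraint ever arises; the layered code is simply $\tilde\C_i^l=\bigl\{\sum_{v=0}^{l-1}\W^{\,v} m_v:m_v\in\tilde\C_i\bigr\}$ with $|\tilde\C_i^l|=|\C_i|^l$ exactly, and the efficiency computation reduces to the two-sided estimate $f_{\min}(\W^{\,l}-1)<\max\tilde Y<f_{\max}\W^{\,l+1}$.

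By writing $\tilde H=\diag(\vec d^{\,-1})H\diag(\vec r)$ instead, you are forced to divide codewords by $r_j$, and the resulting divisibility constraint costs you a factor you do not actually control. The asserted lower bound $|\tilde\C_j^l|\ge|\C_j|^l/r_j$ is false in general: take $\C_j=\{1,3\}$, $r_j=2$, and $W=\W$ odd; then $\sum_v W^v m_v\equiv l\pmod 2$, so for every odd $l$ your constrained set is empty. More generally, whenever $\C_j$ lies in a single nontrivial coset of a subgroup of $\Z/r_j\Z$, the character sum you invoke has modulus $|\C_j|$ for some nontrivial character and can cancel the main term completely. Pigeonhole on the fibres of $\phi$ only guarantees a large fibre \emph{somewhere}, not over $0$.

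The gap can be patched---for instance, shift each $\tilde\C_j^l$ to the most populous residue class and absorb the known shift at the decoder---but the cleaner fix is the paper's: reverse the direction of the equivalence so that the transmit-side scaling is multiplication by an integer, integrality is free, and no counting argument is needed.
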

%\begin{cor}
%In order to find a ``good" layered arithmetic progression code $\C$ for matrix $H$, it is enough to find a  codebook $\C '$ for $H'$, where $H' \in \CC[H]$ and $\ef(\C ')>1$. 
%\end{cor}
%\begin{proof}
%The proof is immediate from Theorem \ref{eq_class} and the definition of ``good" layered codes.
%\end{proof}
\begin{cor}
\label{w=wmax}
Let $\C$ be a codebook for integer channel $H$ with efficiency $\ef(\C)$. There exists a layered codebook $\bar{C}^l$ for $H$ with the following asymptotic efficiency:
\[\aef(\bar{C})=\ef(\C).\]
\end{cor}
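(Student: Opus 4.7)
The plan is to derive this corollary as an immediate specialization of Theorem \ref{eq_class}, taking $\tilde{H}=H$. The only thing needed to justify this substitution is the reflexivity half of Lemma \ref{eq_relation}: choosing $\vec{r}=\vec{d}=(1,1,\ldots,1)$ gives $\diag(\vec{d}^{-1})H\diag(\vec{r})=H$, so $H\sim H$ and hence $H\in\CC[H]$. Since Theorem \ref{eq_class} is stated for an arbitrary representative $\tilde{H}\in\CC[H]$ and places no restriction forbidding $\tilde{H}=H$, I may invoke it with the original channel matrix.

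Concretely, I would take the given codebook $\C$ for $H$ with efficiency $\ef(\C)$, apply Theorem \ref{eq_class} with $\tilde{H}:=H$ and primary codebook $\C$, and read off the layered decodable code $\bar{\C}^l$ it produces together with the identity $\aef(\bar{\C})=\ef(\C)$. This is exactly the statement of the corollary, so no additional computation is required.

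Since all the substantive work, namely exhibiting the "clever" bin size $W$ in the layered construction (as illustrated by the gain from $W=\W=41$ to $W=30$ in Example \ref{ex1}), verifying decodability of the layered code, and computing the asymptotic efficiency, is already packaged inside Theorem \ref{eq_class}, I do not anticipate any genuine obstacle in this proof. The only pedagogical point worth noting is that even in the reflexive case the optimal bin size $W$ can be strictly smaller than $\W$, so $\bar{\C}^l$ is genuinely different from the naive layered code that uses $W=\W$; this is why the corollary is nontrivial despite being a one-line deduction from the theorem. The corollary is recorded separately because subsequent arguments, in particular the proof of Theorem \ref{integer_GIFC} in Section \ref{sec:IGIFC}, need the layered code to live over $H$ itself rather than over an equivalent matrix.
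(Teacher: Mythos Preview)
Your proof is correct and identical to the paper's: apply Theorem~\ref{eq_class} with $\tilde H=H$, using reflexivity of $\sim$. One correction to your commentary, though: in the reflexive case the construction inside Theorem~\ref{eq_class} takes $\vec t=\vec f=(1,\dots,1)$ and hence bin size $W=\W$, so $\bar\C^l$ \emph{is} the naive layered code; the improvement to $W=30$ in Example~\ref{ex1} arises only by first passing to a genuinely different $H'\in\CC[H]$ as in Example~\ref{ex2}, not from the reflexive instance of the theorem.
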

\begin{proof}
This  result is  immediate from Theorem \ref{eq_class} by considering $\tilde{H}=H$. %Note that, in this case, the bin size $W$ is equal to $\W$.
\end{proof}

In general, we may be able to achieve higher efficiencies than that stated in Corollary \ref{w=wmax} by considering another $H' \in \CC[H]$ (and therefore $H\in\CC[H']$) and searching for a codebook $\hat{\C}$ for $H'$ such that $\ef(\hat{\C})>\ef(\C)$. Thus, we know from Theorem \ref{eq_class} that there is a layered codebook that achieves an efficiency of $\ef(\hat{\C})$ for $H$. To make this point clear we consider the setting in Example \ref{ex1} again as Example \ref{ex2} bellow. 
\begin{ex}
\label{ex2}
Consider the matrix $H$ defined in Example \ref{ex1}. Let
\[H'=\left[\begin{array}{ccc}
1 &12&6\\
2&3&6\\
3&3&1\end{array}
\right].\]

One can check that $H' \in \CC[H]$, or in other words, $H \sim H'$. Also, the following sets represent  codebooks for this channel:

\begin{align*}
&\C_1=\{0,1,2,3,4,5\},\\
&\C_2=\{0,1\},\\
&\C_3=\{0,1,2\}.
\end{align*}

One can check that, for this codebook, $\W '=30$, and therefore:
\[\ef(\C ')=\frac{\log(36)}{\log(30)}.\]

From Theorem \ref{eq_class}, there exists a layered code $\tilde{\C}$ for channel $H$, such that:
\[\aef([\tilde{\C]})=\ef(\C ').\]
\end{ex}
Next, we prove Theorem \ref{eq_class}.
\begin{proof}[Proof of Theorem \ref{eq_class}]
Let $s_i=|\C_i|$ be the codebook size of the codebook $\C_i$ respectively.
Let $H=\diag(\vec{f}^{-1}) \tilde{H} \diag(\vec{t})$.
Consider the following set of codebooks:

\eq{\label{ci:apc}\tilde{\C}_i=t_i \C_i.}

Next, we prove this code is decodable for channel $\tilde{H}$. Let $\tilde{Y}=(\tilde{Y}_1,\tilde{Y}_2,...,\tilde{Y}_K)$ be the received signal, and $\tilde{M}=(\tilde{m}_1,\tilde{m}_2,...,\tilde{m}_K) \in \tilde{\C}$ be the transmitted signal vector. Let $M=(m_1,m_2,...,m_K)$ be the equivalent vector of $\tilde{M}$ in $\C$, i.e., $m_i=\frac{\tilde{m}_i}{t_i}$ and $Y=HM$ be the output vector of the channel $H$. One can write:
\eq{\label{m_m'}\tilde{M}=\diag(\vec{t})M.} 
We have:
\begin{align}
\tilde{Y}&=\tilde{H} \tilde{M} \notag\\
&=\tilde{H} \diag(\vec{t}) M\notag\\
&=\diag(\vec{f}) H M \notag\\
&=\diag(\vec{f}) Y. \label{y'_y}
%&=\left[\diag(\vec{f}^{-1}) H' \diag(\vec{t})\right]
\end{align}
 
Thus, to decode $\tilde{M}$, we  first compute 
\eq{\label{y_y'} Y=\diag(\vec{f}^{-1})\tilde{Y}.} 
Then, we decode $M$ and compute $\tilde{M}$ from $M$ using Equation (\ref{m_m'}). Let the maximum channel output of the channel $Y$ when code $\C$ is used at the transmitters be $\W$, i.e.:
\[\W=\max_{\C} Y.\]

Now, we construct the following layered code for the channel $\tilde{H}$:
\eq{\label{layered_cb}\tilde{\C}^l_i=\left\{\sum_{v=0}^{l-1} (\W)^v m_v| m_v \in \tilde{\C}_i\right\}.}

Note that in the definition of $\tilde{\C}^l_i$, $\W$ {\bf corresponds to the maximum value of the channel output for the channel $H$, and not for the channel $\tilde{H}$}.  Decodability of the layered code $\tilde{\C}^l_i$ follows directly from Equation (\ref{y_y'}). Assuming that $\tilde{Y}$, is the received signal for channel $\tilde{H}$, we construct the output of channel $H$ using Equation (\ref{y_y'}). Thus, we are able to find the layered messages $M_v$ for $V\in\{0,1,...,l-1\}$, and reconstruct $\tilde{M}_v$ from them. Note that  $M_v$ can be decoded because of the choice of $\W$ as defined in \ref{layered_cb} for the code $\C$.

Next, we find the maximum value of the output for the channel $\tilde{H}$. Let $f_{max}$ ($f_{min}$) be the maximum (minimum) value of the vector $\vec{f}$. Maximum value of $\tilde{Y}$ can be upper-bounded as the following:

\eq{\label{y'<y}\max \tilde{Y} < f_{max} \max Y.}

We  also compute:
\begin{align}
\max Y_i &= \sum_{v=0}^{l-1} (\W)^v (\max \{Y_{i,v}\})= \sum_{v=0}^{l-1} (\W)^v W_i \notag\\
&< \W \sum_{v=0}^{l-1} (\W)^{v}=\W \frac{\W^l-1}{\W-1}< (\W)^{l+1}-1. \label{max_layered}
\end{align}

Since $\max Y=\displaystyle{\max_i ( \max Y_i)}$, we  conclude:

\eq{\label{maxy} \max Y < (\W)^{l+1}-1.}

Thus, combining Equations (\ref{y'<y}) and (\ref{maxy}), we get:
\eq{\label{maxy'}\max {\tilde{Y}} < f_{max} \cdot \left((\W)^{l+1}-1\right)<f_{max} (\W)^{l+1}.}

Now, we  compute the efficiency of $\tilde{\C_l'}$:
\[\ef(\tilde{\C}^l)=\frac{l \log\left(\displaystyle{\prod_{i=0}^{K-1} s_i}\right)}{ \max \tilde{Y}}> \frac{l \log\left(\displaystyle{\prod_{i=0}^{K-1} s_i}\right)}{ (l+1) \log(\W) + \log(f_{max})}.\]

Letting $l$ go to infinity, we get:

\eq{\label{upp_aef}\aef(\tilde{\C})\ge\ef(\C).}

In a manner similar to before, we lower bound \[\max \tilde{Y} > f_{min} \cdot ((\W)^l-1),\] and therefore:

\eq{\label{low_aef}\aef(\tilde{\C})\le\ef(\C).}

From Equations (\ref{upp_aef}) and (\ref{low_aef}), we have:

\[\aef(\tilde{\C})=\ef(\C).\]

This proves the theorem. 
\end{proof}
%\begin{rem}
%For each equivalence class of $\sim$, we  
%\end{rem}
%Following corollary gives a sufficient condition on arithmetic codes to be a ``good" codes. 
%\begin{cor}
%Let $\C$ be an arithmetic progression code with set sizes $s_i$ for $i=\{1,2,3\}$ satisfying condition (*). $\C$ is a ``good" code if $s_i+1$ is divided by $s_j+1$ and $s_k+1$.
%\end{cor}
%\begin{proof}
%Let $\tW=(s_i+1)(s_j+s_k+3)$. One can check that, this assignment satisfies the conditions of Theorem \ref{layered2}. Thus, $\ef(\hat{C}_l)$ can be computed as:
%\[\ef{\hat{C}_l}=\frac{\log((s_1+1)(s_2+1)(s_3+1))}{(s_i+1)(s_j+s_k+1)+\frac{D}{l}}.\]
%
%This means $\C$ is ``good" because $\ef{\hat{\C}}>1$ for large enough $l$'s.
%\end{proof}

\subsection{Arithmetic Progression Codes}
In this subsection, we investigate achievable efficiency and the maximum  efficiency of a class of codes we refer to as ``arithmetic progression codes". %For this case, we use the notion of ``arithmetic progression codes".
We call a code an ``arithmetic progression code" when the codebook at each transmitter is an arithmetic progression set. It is formally stated as follows:

\begin{defin}
\label{arithprog}
 $\C=(\C_1,\C_2,...,\C_K)$ is an ``arithmetic progression code" if,  $\forall ~ i \in \{1,2,...,K\}$ we have:
\[\C_i= r_i\times \{0,1,...,s_i-1\}=r_i \Z_{s_i},\]
for some integers $s_i,r_i \ge 1$. We refer $r_i$'s as step size.
%$r_i$ is called the ``step size" and $s_i$ is the ``set size".  
\end{defin}

%Next, we apply the theorems from the last sections to get high efficiencies using this class of codes.

%\subsubsection{Achievable Efficiency}

The next lemma facilitates the finding of an arithmetic code for a channel $H$. Intuitively, the lemma states that, to find the best achievable efficiency for this channel, it is enough to check arithmetic progression codes with the unit step sizes.
% of the matrices in $\CC[H]$. 
We formally state this here.

\begin{lem}
\label{unit:stepsize}
Let $\C$ be a decodable arithmetic progression code with step size vector $\vec{r}=(r_1,r_2,...,r_K)$, set size vector $\vec{s}=(s_1,s_2,...,s_K)$, and efficiency $\ef(\C)$ for channel matrix $H$. There exists a channel matrix $H' \in \CC[H]$ and a decodable arithmetic progression code $\C '$ with step size vector $(1,1,...,1)$ and set size $\vec{s}$ where:
\[ \ef({\C}')=\ef(\C).\]
\end{lem}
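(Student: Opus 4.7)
The plan is a direct change of variables: absorb the step sizes $\vec{r}$ of the arithmetic progression code into the channel matrix, producing an equivalent channel on which the corresponding code has unit step sizes. Concretely, given the AP code $\C$ with $\C_i = r_i \Z_{s_i}$ on channel $H$, I would define
\[
H' := H \, \diag(\vec{r}), \qquad \C'_i := \Z_{s_i},
\]
so that $\C'$ is an AP code with step size vector $(1,1,\ldots,1)$ and set size vector $\vec{s}$. Since $\vec{r}$ is integer by assumption and the vector $(1,1,\ldots,1)$ is rational, Definition \ref{relation} applied with $\vec{d}=(1,1,\ldots,1)$ immediately yields $H' \in \CC[H]$, satisfying the first requirement of the lemma.

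The next step is to verify decodability of $\C'$ for $H'$. Given any candidate transmit vector $X' \in \C'_1 \times \cdots \times \C'_K$, the bijection $X'_j \mapsto X_j := r_j X'_j$ sends it into $\C_1 \times \cdots \times \C_K$, and a direct computation gives
\[
\sum_{j=1}^K H'(i,j) X'_j = \sum_{j=1}^K H(i,j) r_j X'_j = \sum_{j=1}^K H(i,j) X_j,
\]
so the sets of possible received signals at receiver $i$ are identical for $(\C,H)$ and $(\C',H')$. Therefore any decoding map $g_i$ that works for $(\C,H)$ yields a decoding map $g_i'$ for $(\C',H')$ via $g_i'(Y_i) = g_i(Y_i)/r_i$.

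It then remains to show equality of efficiencies. Since $|\C_i| = s_i = |\C'_i|$, the numerators of $\ef(\C)$ and $\ef(\C')$ in Equation (\ref{ddof}) agree. Moreover, because the output alphabets coincide pointwise (by the same computation as above), the quantity $\W$ from Equation (\ref{maxh}) is also identical for the two systems, so the denominators match as well, giving $\ef(\C') = \ef(\C)$. I do not anticipate a substantive obstacle, since this is really a change of variables; the only delicate point is bookkeeping to confirm that the integrality of $\vec{r}$ and the rationality of $\vec{d}=\vec{1}$ are exactly what Definition \ref{relation} demands, so that $H'$ genuinely lies in the equivalence class $\CC[H]$ rather than in some larger set.
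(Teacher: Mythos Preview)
Your proposal is correct and takes essentially the same approach as the paper: both set $H'=H\,\diag(\vec r)$ (the paper phrases it as $\diag(\vec d)H\diag(\vec r)$ with $\vec d=(1,\dots,1)$), take $\C_i'=\Z_{s_i}$, and use the bijection $X_j'\mapsto r_jX_j'$ to identify the output sets, yielding the same $\W$ and hence the same efficiency. The only difference is cosmetic: the paper invokes the proof of Theorem~\ref{eq_class} for decodability and the identity $\max Y'=\max Y$, whereas you write out the one-line verification explicitly.
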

\begin{proof}
Let $\vec{d}=(1,1,...,1)$. Consider $H'=\diag(\vec{d})H\diag(\vec{r})$. From the definition, $H' \sim H$. We define the following codebook for $H'$:
\[\C_i'=\{0,1,...,s_i-1\}.\]

Using a similar reasoning as used in proof of Theorem \ref{eq_class}, one can check that $\C_i '$ is decodable for $H'$. Also, from Equation (\ref{y_y'}), $\max Y'= \max Y$. Thus, the code $\C '$ has the same efficiency as $\ef(\C)$.
\end{proof}

In this section, we characterize the achievable rates when arithmetic progression codes with unit step size are used. Let  \[\gcd(H(i,j))_j=\gcd(H(i,1),H(i,2),...,H(i,K)),\] and
\[\gcd(H(i,j))_{j\neq i}=\gcd(H(i,1),...,H(i,i-1),H(i,i+1),...,H(i,K)).\]

Note that, $\gcd(H(i,j))_{j}$ always divides $\gcd(H(i,j))_{j\neq i}$.

Next theorem gives an essential characterization of the achievable efficiency for the arithmetic progression codes with unit step size.

\begin{thm}
\label{asym_thm}
Consider an integer-valued channel $H \in \Z^{K \times K}$. 
%For all $(r_1,r_2,...,r_K) \in \Z^K$, 
Let $s_i$ be defined as the following:
\[s_i=\frac{\gcd(H(i,j))_{j \neq i}}{\gcd(H(i,j))_{j }}.\]

The arithmetic progression code $\C=(\Z_{s_1},\Z_{s_2},...,\Z_{s_K})$ is decodable and achieves the following efficiency:

\[\ef(\C)=\frac{\log\left(\displaystyle{\prod_{i=1}^K s_i}\right)}{\log(\W)},\]

where $\W=\displaystyle{\max_i\{W_i\}+1}$, where $W_i$ is given by:

\begin{align*}
W_i= \sum_{j=1}^K H(i,j) \left( s_j-1\right).
\end{align*}
\end{thm}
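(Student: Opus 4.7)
The plan is to handle decodability and the computation of $\W$ separately; the former requires a short modular-arithmetic argument, while the latter is essentially a definition chase.

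For decodability at receiver $i$, I would introduce $\delta_i=\gcd(H(i,j))_{j\neq i}$ and $\Delta_i=\gcd(H(i,j))_{j}$, so that $s_i=\delta_i/\Delta_i$ and $\delta_i$ divides every cross coefficient $H(i,j)$ with $j\neq i$. Reducing
\[Y_i=H(i,i)X_i+\sum_{j\neq i}H(i,j)X_j\]
modulo $\delta_i$ annihilates all interference and yields $Y_i\equiv H(i,i)X_i\pmod{\delta_i}$. I would then write $H(i,i)=\Delta_i\alpha_i$ and $\delta_i=\Delta_i s_i$, which is valid since $\Delta_i$ divides both, and observe that
\[\Delta_i=\gcd(H(i,i),\delta_i)=\gcd(\Delta_i\alpha_i,\Delta_i s_i)=\Delta_i\gcd(\alpha_i,s_i),\]
so $\gcd(\alpha_i,s_i)=1$. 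Because $\Delta_i$ divides every $H(i,j)$ it also divides $Y_i$, and dividing through by $\Delta_i$ reduces the congruence to $Y_i/\Delta_i\equiv\alpha_i X_i\pmod{s_i}$. Since $\alpha_i$ is then a unit modulo $s_i$, the rule $X_i\equiv\alpha_i^{-1}(Y_i/\Delta_i)\pmod{s_i}$ recovers $X_i$ uniquely in $\{0,1,\ldots,s_i-1\}$, supplying a valid decoding function in the sense of (\ref{gfunc}).

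With decodability in hand, the remaining steps are bookkeeping. Since the integer channel coefficients are nonnegative and $\C_j=\{0,\ldots,s_j-1\}$, the output $Y_i$ is maximized by taking $X_j=s_j-1$ for every $j$, giving exactly $W_i=\sum_j H(i,j)(s_j-1)$ and hence $\W=\max_iW_i+1$ as stated. Applying Definition \ref{def:ef} with $m=1$ and $|\C_i|=s_i$, the sum-rate is $\sum_i\hat R_i=\log\prod_is_i$, and dividing by $\log\W$ yields the claimed efficiency.

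The main obstacle is the coprimality step $\gcd(\alpha_i,s_i)=1$; once that identity is secured, modular inversion immediately gives the decoder and the rest of the argument follows directly from the definitions of $\W$ and $\ef$.
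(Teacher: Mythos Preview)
Your proof is correct and follows essentially the same approach as the paper: both arguments factor out $\Delta_i=\gcd(H(i,j))_j$ from $Y_i$, use the coprimality $\gcd\!\bigl(H(i,i)/\Delta_i,\;\delta_i/\Delta_i\bigr)=1$ to obtain a modular inverse, and recover $X_i$ as $\alpha_i^{-1}(Y_i/\Delta_i)\bmod s_i$. Your notation is a bit cleaner, but the key step (the coprimality identity and the resulting modular inversion) and the treatment of $\W$ are identical to the paper's.
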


Using this theorem, we can show the following achievable efficiency for the symmetric channels.  
\begin{cor}
For a symmetric integer-valued channel matrix, i.e., $H \in \Z^{K,K}$, $H(i,i)=a$ and $H(i,j)=h$ for all $i \neq j$, where $\gcd(a,h)=1$, there exists a code $\C$ with the following efficiency:
\[\ef(\C)=\frac{K \log(h)}{\log(h(a+(K-1)(h-1))+1-a)}.\] 
\end{cor}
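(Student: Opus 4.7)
The plan is to apply Theorem \ref{asym_thm} directly to the symmetric channel matrix and then perform the algebraic simplification needed to match the claimed expression for $\W$.

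First, I would compute the two gcd quantities that enter the definition of $s_i$ in Theorem \ref{asym_thm}. For the symmetric channel, row $i$ of $H$ consists of one diagonal entry equal to $a$ and $K-1$ off-diagonal entries equal to $h$. Therefore $\gcd(H(i,j))_{j \neq i} = \gcd(h,\ldots,h) = h$, while $\gcd(H(i,j))_{j} = \gcd(a,h,\ldots,h) = \gcd(a,h) = 1$ by hypothesis. Hence $s_i = h/1 = h$ for every $i \in \{1,\ldots,K\}$, and the arithmetic progression code delivered by the theorem is $\C = (\Z_h,\Z_h,\ldots,\Z_h)$.

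Next I would substitute into the formula for $W_i$ in Theorem \ref{asym_thm}. Because all $s_j$ equal $h$, the quantity $W_i$ turns out to be independent of $i$:
\[
W_i = a(h-1) + (K-1)\, h\,(h-1) = (h-1)\bigl(a + (K-1)h\bigr),
\]
so $\W = W_i + 1 = (h-1)(a + (K-1)h) + 1$. The only nontrivial algebraic step is to verify that this coincides with the denominator inside the $\log$ in the corollary's statement. Expanding both sides gives
\[
(h-1)\bigl(a + (K-1)h\bigr) + 1 = ah - a + (K-1)h(h-1) + 1 = h\bigl(a + (K-1)(h-1)\bigr) + 1 - a,
\]
which is exactly the expression claimed.

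Finally I would assemble the efficiency. Since $|\C_i| = s_i = h$ for each $i$, we have $\prod_{i=1}^K s_i = h^K$, and Theorem \ref{asym_thm} yields
\[
\ef(\C) = \frac{\log\bigl(\prod_{i=1}^K s_i\bigr)}{\log(\W)} = \frac{K \log(h)}{\log\bigl(h(a+(K-1)(h-1))+1-a\bigr)},
\]
which is the desired expression. The only step that requires any care is the algebraic identity showing that the two equivalent forms of $\W$ agree; everything else is immediate from Theorem \ref{asym_thm} once one observes that the coprimality assumption $\gcd(a,h)=1$ forces $s_i = h$ uniformly.
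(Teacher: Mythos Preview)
Your proposal is correct and follows exactly the approach of the paper, which simply observes that $s_i=h$ for all $i$ and invokes Theorem~\ref{asym_thm}. You have merely spelled out the gcd computation and the algebraic verification of $\W$ that the paper leaves to the reader.
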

\begin{proof}
In this case, one can check that $s_i=h$ for all $i\in\{1,2,...,K\}$. The corollary results  from Theorem \ref{asym_thm}.
\end{proof}

%Note that in the corollary stated above, if $h$ goes to infinity, $\ef(\C)$ will go to $\frac{K}{2}$ which is the upper-bound for the efficiency as proved in the Theorem \ref{bounds_ef}. This shows the optimality of arithmetic progression codes for very high indirect channel gains $h$ in the symmetric channels. Next, we prove Theorem \ref{asym_thm}.
\begin{proof}[Proof of Theorem \ref{asym_thm}]
Consider an arithmetic progression code $\C_i=\Z_{s_i}$ as defined in the theorem.
% defined by Equation (\ref{arithprog}) with step size $r_i$ and set size $s_i$ given by
One can check that, with this code-structure, we have  $\max{\Y_i}=W_i$. %This results in the $\W$ stated given in the theorem statement. 
Next, we  show that this code is in fact decodable. Assume Transmitter $i$ desires to communicate the codeword $X_i=m_i \in \Z_{s_i}$.
%, where $m_i \in \{0,1,...,s_i\}$. 
Using a simple transformation (factoring), Receiver $i$ can rewrite $Y_i$ as:

\begin{align}
Y_i= \gcd&(H(i,j))_{j} \Big(\frac{H(i,i)}{\gcd(H(i,j))_{j }} m_i + \\ &\frac{\gcd(H(i,j))_{j \neq i}}{\gcd(H(i,j))_{j}}\sum_{j \neq i} \frac{H(i,j)}{\gcd(H(i,j))_{j \neq i}} m_j  \Big),
\end{align}
where all the fractions belong to the set of integers $\Z$.

Thus, utilizing the fact that 
\[\gcd\left(\frac{H(i,i)}{\gcd(H(i,j))_{j}},\frac{\gcd(H(i,j))_{j \neq i}}{\gcd(H(i,j))_{j}}\right)=1,\]

and therefore:
\[\exists \  h_i,  \mode{h_i \frac{H(i,i)}{\gcd(H(i,j))_{j}}}{\frac{\gcd(H(i,j))_{j \neq i}}{\gcd(H(i,j))_{j}}}=1. \]

Knowing that  $m_i \le s_i-1$, we can write:

\[m_i=\mode{\frac{h_i Y_i}{\gcd(H(i,j))_{j }}}{\frac{\gcd(H(i,j))_{j \neq i}}{\gcd(H(i,j))_{j}}}.\]
\end{proof}

%Next definition is about assigning an arithmetic progression code with unit step sizes to each channel matrix. 
\begin{defin}
Let $H \in \Z^{K \times K}$ be a channel matrix. Define $\C_H$ as the code obtained from Theorem \ref{asym_thm}, when $r_i=1$ for $i=1,2,...,L$.
\end{defin}

Next, we provide two examples that illustrate Theorems \ref{eq_class} and \ref{asym_thm}. This example shows how arithmetic progression codes can be used to construct layered codebooks with high efficiency.

\begin{ex}
Consider the following channel matrix:
\[H=\left[\begin{array}{ccc}
1 &a_1&a_2\\
a_3&1&a_4\\
a_5&a_6&1\end{array}
\right],\]
where $a_i>1$ and $\gcd(a_i,a_j)=1$ for all $i,j$. 

Let $H'$ be defined as the following:
\[H'=\left[\begin{array}{ccc}
a_4 a_6 &a_1 a_2 a_5&a_1 a_2 a_3\\
a_3 a_4 a_6&a_2 a_5&a_3 a_4 a_1\\
a_4 a_5 a_6&a_2 a_5 a_6&a_1 a_3\end{array}
\right],\]
One can check that $H\in \CC[H']$.

From Theorem \ref{asym_thm}, there exists a codebook $\C$ for $H'$ with the following efficiency: 
\[
\ef(\C) = \frac{\log(a_{1} a_{2} a_{3} a_{4} a_{5} a_{6} )}{\log(\W)}.
\]

Furthermore, we can upper bound $W_1$ as:
\[W_1 < a_{1} a_{2} \left ( a_{4} a_{6} + a_{3} a_{5} \left ( a_{4}+ a_{6} \right ) \right ).\] 
Next, we show that $\ef(\C)>1$. Without loss of generality, we assume that the maximum value for $\W$ is achieved for $i=1$; then we get:

\begin{align*}
\frac{\W}{a_1 a_{2} a_{4} a_{6}} &< 1 +a_{3} a_{5}( \frac{1}{a_{4}}+ \frac{1}{a_{6}}) \\
%&=1+h_{21} h_{31} (\frac{1}{h_{32}}+\frac{1}{h_{23}}) \\
&<  a_{3} a_{5},
\end{align*}
or equivalently, $\W<a_1 a_2 a_3 a_4 a_5 a_6$ which means $\ef(\C)>1$. 

In this case, if all the $a_{i}$'s are of the same order and large, it can be shown, \cite[Corollary 3]{Jafarian2010}, that $\ef(\C) \lessapprox \frac{6}{5}$.

Now, using Theorem \ref{eq_class}, we know that there exists a layered codebook $\hat{\C}^l$ for the channel $H$, with the following asymptotic efficiency:
\[\aef(\hat{\C})=\ef(\C)>1.\]

Note that, if we apply Theorem \ref{asym_thm} for the original matrix $H$, we can not achieve an asymptotic efficiency of more than one. 
\end{ex}

Another example is given here.
\begin{ex}
Consider the following  channel matrix:
\[H=\left[\begin{array}{ccc}
1&a&b\\
a&1&c\\
b&c&1\end{array}
\right],\]
where $a,b$ and $c$ are pairwise coprime. Here, we show existence of  an asymptotic ``good" arithmetic progression code for this channel.

Let $H'$ be defined as:
\[H'=\left[\begin{array}{ccc}
c&a b&a b\\
a c&b& a c\\
 b c& b c& a\end{array}
\right].\]
One can check that $H\in\CC[H']$.
 From Theorem \ref{asym_thm}, there is a arithmetic progression codebook ${\C}$ for the channel $H'$ with the following efficiency:
\[\ef(\C)=\frac{2\log(a b c)}{\log(\W)},\]
where $\W<abc(\max\{a+b,a+c,b+c\}+1)$, which means $\ef(\C)>1$. Using the same steps as given in Theorem \ref{eq_class}, we can construct a layered codebook $\hat{\C}^l$ for the channel $H$, with the following asymptotic efficiency:
\[\aef(\hat{\C})=\ef(\C)>1.\] 
\end{ex}

%Note that, in both of the examples stated above the layered codebook $\hat{\C}$ constructed from Theorem \ref{layered_eq} is still based on an arithmetic progression codebook. This follows from the construction of codebook in the Theorem \ref{layered_eq} in Equation (\ref{ci:apc}).

%\begin{thm}
%\label{layered2}
%Let $H \in \Z^{K \times K}$ be the channel matrix. Assume that for this channel, we are given an arithmetic progression codebook $\C$ with step sizes $(r_1,r_2,...,r_K)$ and set sizes $(s_1,s_2,...,s_K)$. Let $d_i=( (H(i,j)r_j)_{j\neq i},r_i)$, and $\tilde{W}_i=\max\{\frac{Y_i}{d_i}\}$. Let $\tW$ be an integer satisfying:
%
%\[\tW=\max_i\left\{\tilde{W}_i\right\}+1.\]
%
% and an integer $\tW$, satisfying the followings:
%\begin{enumerate}
%\item $[s_1+1,s_2+1,s_3+1]$ divides $\tW$
%\item $[s_1+1,s_2+1,s_3+1]\le \tW <(s_1+1)(s_2+1)(s_3+1)$
%\item $\displaystyle{\frac{\tW}{s_i+1}>\frac{H(i,j)r_js_j+H(i,k)r_ks_k}{(H(i,j)r_j,H(i,k)r_k)}}$
%\end{enumerate}
%
%Then, there exists a decodable layered code $\tilde{C}$ which satisfies the following condition:
%\[\aef(\tilde{\C})=\frac{\displaystyle{\log\left(\prod_{i=1}^K s_i\right)}}{\log(\tW)}\]
%\[\aef(\tilde{\C})=\ef(\C) \frac{\log(\W)}{\log(\tW)}\]
%Then, for any positive integer $l$, there exists a decodable code $\hat{\C}_l$ satisfying the followings:
%\begin{enumerate}
%\item $|\hat{\C}_{i,l}|=|\C_i|^l$
%\item $\displaystyle{\ef(\hat{\C}_l)>\ef(\C) \times \frac{\log(\W)}{\log(\tW)+ \frac{D}{l}}}$ for some constant $D$.
%\end{enumerate}
%\end{thm}

Given this background, we state the main theorem of this section.
\begin{thm}
\label{thm_eq_ef}
For a given integer channel matrix $H$, let $\ef(\CC[H])$ be defined as in Equation (\ref{ef_class}). 
From the definition of $\ef(\CC[H])$ and Lemma \ref{unit:stepsize}, for any  $\epsilon>0$, there exists a matrix $H' \in \CC[H]$, such that $\ef(\C_{H'})>\ef(\CC[H])-\epsilon$.
%\begin{enumerate}
%\item $H' \sim H$.
%\item $\ef(\C_{H'})>\ef(\CC)-\epsilon$
%\end{enumerate}
We can construct a layered code $\tilde{\C}_l$ for channel matrix $H$, based on $\C_{H'}$ employing the construction proposed in Theorem \ref{eq_class} with the following asymptotic efficiency:

\[\aef(\tilde{\C})=\ef(C_{H'})> \ef(\CC[H])-\epsilon.\]
\end{thm}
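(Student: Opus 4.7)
The plan is to assemble the theorem from pieces already established in the paper, so the argument is short and organizational rather than computational. The first step is to unpack the supremum in the definition $\ef(\CC[H]) = \sup_{\hat H \in \CC[H],\, \C_{\hat H}} \ef(\C_{\hat H})$ from Equation (\ref{ef_class}): for any $\epsilon > 0$, by definition of supremum there exists some matrix $H' \in \CC[H]$ together with a codebook $\C_{H'}$ for the channel $H'$ such that
\[
\ef(\C_{H'}) \;>\; \ef(\CC[H]) - \epsilon.
\]

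The second step is to invoke the symmetry property of the relation $\sim$ established in Lemma~\ref{eq_relation}: since $H' \in \CC[H]$ we also have $H \in \CC[H']$. This is precisely the setup needed to apply Theorem~\ref{eq_class} in the direction we want. Taking the matrix denoted by ``$H$'' in Theorem~\ref{eq_class} to be our $H'$, and the matrix denoted by ``$\tilde H$'' in Theorem~\ref{eq_class} to be our $H$, and the codebook ``$\C$'' in Theorem~\ref{eq_class} to be $\C_{H'}$, the theorem supplies a layered decodable code $\tilde\C^l$ for the channel matrix $H$ satisfying
\[
\aef(\tilde\C) \;=\; \ef(\C_{H'}).
\]

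Combining the two displays gives $\aef(\tilde\C) > \ef(\CC[H]) - \epsilon$, which is the required conclusion. Lemma~\ref{unit:stepsize} is not strictly needed for this argument, but it can be cited as justification that, when the supremum is approached through arithmetic progression codes, one may restrict attention to unit-step-size codes at the cost of passing to another representative of $\CC[H]$; this is convenient for actually constructing the witness $(H',\C_{H'})$ in the first step via Theorem~\ref{asym_thm} as in Examples~\ref{ex1} and~\ref{ex2}.

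There is no real obstacle here: all the technical work has already been done in Lemma~\ref{eq_relation} (equivalence structure of $\sim$) and Theorem~\ref{eq_class} (the explicit layered construction based on the conjugation $H = \diag(\vec f^{\,-1}) \tilde H \diag(\vec t)$, together with the bound $\max \tilde Y < f_{\max}(\W)^{l+1}$ that makes the efficiency converge to $\ef(\C)$ as $l\to\infty$). The only point that requires any care is to make sure one applies Theorem~\ref{eq_class} in the correct direction, which is legitimate precisely because $\sim$ is an equivalence relation.
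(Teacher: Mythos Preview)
Your proposal is correct and matches the paper's approach: the paper does not give a separate proof for this theorem because the argument is embedded in the statement itself---unpack the supremum in~(\ref{ef_class}) to produce a near-optimal $(H',\C_{H'})$, then apply Theorem~\ref{eq_class} (legitimized by the symmetry of $\sim$ from Lemma~\ref{eq_relation}) to transport the efficiency back to a layered code on $H$. Your remark that Lemma~\ref{unit:stepsize} is not strictly needed but justifies restricting the supremum to unit-step arithmetic progression codes (so that $\C_{H'}$ can be taken as the canonical code of Theorem~\ref{asym_thm}) is exactly the role the paper assigns it.
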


\section{Back to Gaussian Interference Channel}
\label{sec:GIFC}

%This section is divided in two subsections. In the first section, we address the problem of finding an achievable scheme for Gaussian interference channel with integer channel gains, we also study the behavior of this channel in high SNR regime (from the Degrees of Freedom prospective). Next, we extend our results for the interference channels with real gains. 

This section is divided into two parts. In the first subsection, we provide the overall achievable scheme corresponding to the sum-rate presented in Theorem \ref{integer_GIFC}. In the second subsection, we develop a modified version of this scheme to be used for non-integer channels that results in the achievable sum-rate  in Theorem \ref{thm_RGIFC}.

\subsection{Integer Channel Gains}
\label{sec:IGIFC}
In this section we present more discussions and proof for Theorem \ref{integer_GIFC}. 

In order to prove this theorem, we use nested lattices based on Construction-A to generate the codebooks, $\CG$ at each transmitter. Nested lattices $\LA{c} \subset \LA{f}$ are constructed as introduced in \cite{Erez2004} and stated in Equations (\ref{eqn:construct1})  and (\ref{eqn:construct2}). We present them again below:
\begin{align*}
%\label{ean:construct1}
\LA{f}&=\G_1 \left(\frac{1}{q}\G_2.\Z_q+\Z^n \right),\\
\LA{c}&=\G_1 \Z^n. %\label{eqn:construct2}
\end{align*}

Let $\CG_0=\LA{f}\cap \V_c$. We employ Lemma 1  in \cite{Jafarian2009} to derive constraints on prime $q$ to  ensure  the existence of a ``good" matrix $\G_1$ and vector $\G_2$, where the notion of ``good'' is as defined in \cite{Erez2004}. We reproduce the relevant lemma below for convenience:

\begin{lem}[Lemma 1 in \cite{Jafarian2009}]
\label{lem:goodness}
Assuming that $q^\frac{2}{n}\le \frac{P}{Z}$, there exists a matrix $\G_1$ and a vector $\G_2$  such that the following hold true:
\begin{enumerate}
\item $\sigma^2(\V_c)=P$.
\item Probability of error in determining $\lambda \in \CG_0$ from $Y=\lambda+N$ (where $N$ is an AGN with variance $Z$) using lattice decoding can be made arbitrarily small for large lattice dimension $n$.
%\item $G(\LA{c})\approx \frac{1}{2 \pi e}$
\end{enumerate}
\end{lem}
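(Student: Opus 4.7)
The plan is to construct $\G_1$ and $\G_2$ in two decoupled stages. First I would fix $\G_1$ so that $\LA{c} = \G_1 \Z^n$ is drawn from a sequence of lattices that are good for quantization with $\sigma^2(\V_c) = P$; this is a purely deterministic choice. Then, with $\G_1$ frozen, I would randomize $\G_2$ uniformly over $\Z_q^n$ and show that for some realization the fine lattice $\LA{f}$ is good for channel coding against additive Gaussian noise of variance $Z$.

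For the first stage, classical results of Rogers, as refined by Zamir--Feder and Erez--Litsyn--Zamir, provide a sequence of $n$-dimensional lattices with $\mathcal{G}(\Lambda) \to 1/(2\pi e)$. Rescaling the generator yields $\G_1$ with $\sigma^2(\V_c) = P$ exactly and $\Vol{\V_c}^{2/n} = 2\pi e P \cdot (1+o(1))$. For the second stage, the primality of $q$ together with $\G_2 \neq 0$ forces $\LA{c}$ to sit inside $\LA{f}$ with index exactly $q$, so $\Vol{\V_f} = \Vol{\V_c}/q$. The hypothesis $q^{2/n} \le P/Z$ then gives $\Vol{\V_f}^{2/n} \ge 2 \pi e Z \cdot (1+o(1))$, placing the ensemble above the Poltyrev volume-to-noise threshold for reliable unrestricted decoding.

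The core of the argument is a Minkowski--Hlawka--Siegel type first-moment bound for this Construction-A ensemble: for any bounded measurable $S \subset \R^n$, the expectation over $\G_2$ of the count $|(\LA{f} \setminus \LA{c}) \cap S|$ is, up to lower-order terms, $V(S)/\Vol{\V_f}$, because each nonzero coset representative $\frac{k}{q}\G_1 \G_2$ modulo $\LA{c}$ (for $k \in \Z_q \setminus \{0\}$) is uniformly distributed over $\V_c$ when $\G_2$ is uniform on $\Z_q^n$ and $q$ is prime. Combining this with the standard Poltyrev error decomposition --- a Gaussian tail bound on the event that $\|N\|$ exceeds a radius slightly above $\sqrt{nZ}$, plus a union bound over nonzero lattice points within that radius of the received point --- the expected lattice-decoding error probability vanishes as $n \to \infty$. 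Picking a specific $\G_2$ achieving at most the average then yields the desired conclusion, while $\G_1$ (and hence the second-moment condition) is untouched.

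The main obstacle is verifying the Minkowski--Hlawka bound for this restricted ensemble, in which only the single vector $\G_2$ (rather than a full $k \times n$ generator of a linear code over $\Z_q$) is random and $\G_1$ is held deterministic. Standard references treat ensembles with more randomness, so the counting must be redone carefully; fortunately the cyclic structure of $\Z_q$ for prime $q$, together with $\G_2 \neq 0$, delivers the needed coset uniformity almost for free. Once that lemma is in place the remainder of the argument is the familiar Poltyrev template, and the joint goodness of $(\G_1, \G_2)$ follows.
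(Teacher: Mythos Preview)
The paper does not supply its own proof of this lemma; it is imported verbatim from \cite{Jafarian2009} and ultimately rests on the Construction-A ensemble analysis of Loeliger and Erez--Litsyn--Zamir \cite{Loeliger1997,Erez2005}. Your sketch is essentially the standard route those references take: fix the coarse lattice to have the right second moment, randomize the code generator, and run a Minkowski--Hlawka/Poltyrev first-moment argument over the fine-lattice ensemble. So there is nothing in the present paper to compare against beyond that citation, and your plan is faithful to the cited approach.

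One point does need tightening. In your union bound over nonzero points of $\LA{f}$, the contribution from $\LA{c}\setminus\{0\}$ is \emph{not} averaged by the randomness in $\G_2$; it is fixed once $\G_1$ is fixed. Requiring $\LA{c}$ to be merely quantization-good does not by itself exclude short vectors in $\LA{c}$: a sequence with $\mathcal{G}(\Lambda)\to 1/(2\pi e)$ need not be Poltyrev-good, so a deterministic coarse-lattice point could sit inside the noise ball and wreck the error bound regardless of $\G_2$. You therefore also need $\LA{c}$ to be good for channel coding at noise level $Z$, which is consistent with the volumes since $\Vol{\V_c}^{2/n}\to 2\pi e P>2\pi e Z$. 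The Erez--Litsyn--Zamir existence result already furnishes lattices that are simultaneously quantization- and Poltyrev-good, so the repair is a one-line strengthening of your choice of $\G_1$; with that amendment the remainder of your argument goes through as written.
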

%\begin{proof}[proof of Lemma \ref{goodness}]
%Follows directly from \cite{erez:zamir}
%\end{proof}

Let $H$ belongs to an equivalence class of $K\times K$ integer matrices $\CC$. Let $H'$ be the matrix in the same equivalence class $\CC$, from Theorem \ref{thm_eq_ef}. We know:
\[\ef(\C_{H'})>\ef(\CC)-\epsilon.\]

Let $H'=\diag(\vec{f})H \diag(\vec{t})$. Let $\W$ be the maximum channel output of the channel $H'$, when  Transmitter $i$ uses codebook ${\C_{H',i}}$. Therefore,

\[\ef(\C_{H'})=\frac{\log\left(\prod_{i=1}^K |\C_{H',i}|\right)}{\log(\W)}.\]

Consider the layered code $\tilde{\C}_l$ for channel $H$ which is constructed based on $\C$ from Theorem \ref{thm_eq_ef}. Let $\tilde{W}_{max,l}$ be the maximum output signal of the channel $HX$, when codebook $\tilde{\C}_l$ is used.

For any positive integer $n$, choose $l$ such that: 

\[2 \tilde{W}_{max,l}<\left(\frac{P}{Z}\right)^\frac{n}{2}<2 \tilde{W}_{max,l+1}.\]

Note that from Equations (\ref{maxy}) and (\ref{maxy'}), we can upper bound $\tilde{W}_{max,l+1}$  by $ \W \tilde{W}_{max,l}$. Thus,  we have:

\eq{\label{l,n}\frac{1}{n}\log(2 \tilde{W}_{max,l})<\frac{1}{2}\log\left(\frac{P}{Z}\right)<\frac{1}{n} \log(2 \W \tilde{W}_{max,l}).} 

Next, for this choice of $l$, we know there exists a  prime $q$ that satisfies:
\begin{equation}\label{cond:q}
 \tilde{W}_{max,l}<q<2 \tilde{W}_{max,l}.
\end{equation}

Note that $q<\left(\frac{P}{Z}\right)^\frac{2}{n}$ and therefore this choice of $q$ also satisfies the requirements imposed by Lemma \ref{lem:goodness}. 

%Moreover, we can write:
%
%\[\frac{1}{n}\log(\hat{W}_{max,l}) +\frac{1}{n}\log(2)< \frac{1}{2}\log(P),\]
%and
%\[\frac{1}{2}\log(P) < \frac{l}{n} \log(\W)+ \frac{1}{n} \log(2f_{max}\W^2).\]
%
%For any finite choice of $r_i$'s, $d_{max}$ and $W$ are finite and bounded, and therefore, in the limit when $n\rightarrow \infty$, we have:
%
%\begin{equation}
%\label{cond:W,P}
%\frac{l}{n}\log(\W)\approx \frac{1}{2}\log(P).
%\end{equation}

%Before we study the properties of the codebook $C_i$ at each transmitter, we  define the following sets:
%
%\begin{equation}
%\label{si:def}
%A_i=\{s:s=%\frac{a_{jk d_{kj}}{d_i}\sum_
%r_i \sum_{j=0}^{l-1} t_i W^i, 0\le t_i<\frac{a_i b_i}{d_i}\}.
%\end{equation}

We also know that $|\tilde{\C}_{i,l}|= |\C_{H',i}| ^l$.
With this background, we prove the main Theorem.
\begin{proof}[Proof of Theorem \ref{integer_GIFC}]

Let $\LA{f_i}=\G_1 \left( \frac{1}{q} \G_2 \tilde{\C}_{i,l} +\Z^n \right).$ Note that $\LA{f_i} \subset \LA{f}$ for all $i=1,2,3$. We also define $\CG_i = \LA{f_i} \cap \V_c$. It follows that $\CG_i\subset \CG_0$ and 
\begin{equation}
\label{ci:cardinality}
|\CG_i|=|\C_{i,l}|=|\C_{H',i}|^l.
\end{equation}

Consider the following encoding and decoding scheme for the Gaussian  interference channel:

{\bf Encoding Scheme:}
Transmitter $i$ chooses a codeword $X_i \in \CG_i$ associated with the desired message.

{\bf Decoding Scheme:}
Decoding is done in three steps. Each receiver first eliminates the additive Gaussian noise using lattice decoding as done in \cite{Loeliger1997}, then constructs a one-dimensional deterministic channel from the received lattice point. Next,  it determines the intended codeword from the resulting equivalent deterministic channel\footnote{Note that this notion is a special case of computational codes developed for more general network settings in \cite{Nazer2008}}. Let 
\[ Y_i=X_i+\sum_{j\neq i}H(i,j) X_j +N_i,\]
be the received signal at Receiver $i$ and denote
 \[
\tilde{Y}_i=Y_i-N_i=X_i+\sum_{j\neq i}H(i,j) X_j, 
\]
be the noise free, received signal.
\begin{enumerate}
\item {\bf AGN removal using lattice decoding:} Since channel coefficients are integers and $X_i\in \LA{f}$, $\tilde{Y}_i \in \LA{f}$. By choosing an appropriate prime $q$ (as given by Equation (\ref{cond:q})), we ensure that the transmit lattices are ``good'' for channel coding, and thus the noise $N_i$ can be ``removed'' from $Y_i$ to get $\tilde{Y}_i$ with high probability.

\item {\bf Construction of an equivalent one-dimensional deterministic channel:} Let 
\[X_i=\G_1 \left( \frac{1}{q} \G_2 e_i +{\bf z}_i\right),\]
where $s_i \in \tilde{\C}_{i,l}$ and ${\bf z}_i \in \Z^n$ (note that this assignment is unique). Without loss of generality, assume  that $g$, the first entry of the vector $\G_2$, is nonzero. Since $g$ is non zero and $g \in \Z_q$, $g$ has an inverse element in $\Z_q$, we call that $g^{-1}$. Define: 
\begin{equation}
\label{det:equ}
u_i \triangleq e_i+\sum_{j \neq i} H(i,j) e_j.
\end{equation} 

Note that $u_i$ is the output signal of a deterministic channel when each transmitter uses codebook $\tilde{\C}_{i,l}$. Thus from Equation (\ref{cond:q}), we have:
\[u_i<f_{max}\W^{l+1}<q.\]

From inequality stated above, one can check that:

\eq{\label{ui}u_i=[g^{-1}(q \G_1^{-1} \tilde{Y}_i)_1] \mod q, } where $({\bf v})_1$ is the first entry of a vector ${\bf v}$.

\item {\bf Determining $X_i$, and thus the intended message:} Using $u_i$ and decodability property of the layered code $\tilde{\C}_{i,l}$,  we can determine $e_i$. $X_i$ can be computed from $e_i$ as  follows:

\[X_i=\left[\frac{1}{q} \G_1 \G_2 e_i\right] \mod \LA{c}.\]
\end{enumerate}

To complete the proof of this theorem, we must determine the rate achieved through this scheme by each user.
Let $|\CG_i|=2^{nR_i}$. From Equations (\ref{ci:cardinality}, \ref{cond:W,P}) we get:

\[R_i=\frac{l}{n}\log\left(|\C_{H',i}|\right)\]
%=\frac{1}{2}\log(P)\frac{\log\left(|\C_{H',i}|\right)}{\log(\W)}.\] 

Corresponding sum-rate is:
\eq{\label{achsr}\sum_{i=1}^K R_i=\frac{l}{n}\log\left(\prod_{i=1}^K|\C_{H',i}|\right)=\frac{1}{n}\log(\tilde{W}_{max,l})\ef(\tilde{\C}_l)}

Now from Equation (\ref{l,n}), we can write:

\[\frac{1}{n}\log(\tilde{W}_{max,l}) +\frac{1}{n}\log(2)< \frac{1}{2}\log(P),\]
and
\[\frac{1}{2}\log\left(\frac{P}{Z}\right) < \frac{1}{n} \log( \tilde{W}_{max,l})+ \frac{1}{n} \log(2\W).\]

Therefore as $n$ becomes very large, and $l$ grows with respect to $n$ to satisfy Equation (\ref{l,n}), we have:

\begin{equation}
\label{cond:W,P}
\frac{1}{n}\log( \tilde{W}_{max,l})\approx \frac{1}{2}\log\left(\frac{P}{Z}\right).
\end{equation}

Combining Equations (\ref{achsr}) and (\ref{cond:W,P}), and letting $n$ goes to infinity, we desired result:
\[\sum_{i=1}^K R_i=\frac{1}{2}\log\left(\frac{P}{Z}\right)\aef(\tilde{\C})>\frac{1}{2}\log\left(\frac{P}{Z}\right)(\ef(\CC)-\epsilon),\]
where the last inequality follows from Theorem \ref{thm_eq_ef}.
%This concludes the proof of Theorem \ref{integer_GIFC}.
\end{proof}

\subsection{Real Channel Gains}
\label{sec:RGIFC}
In the last section, a coding scheme proposed to achieve the sum-rate promised in Theorem \ref{integer_GIFC}. We observe that in general, this new scheme can achieve higher rates than time-share when the channel matrix is integer. Here, we investigate to find a modified scheme to achieve sum-rate proposed in the Theorem \ref{thm_RGIFC}, when channel matrix is real.

\begin{proof}[Proof of Theorem \ref{thm_RGIFC}]
In order to proof this theorem, each transmitter needs to use a random dither. Let $U \in \V{c}$ be a vector chosen uniformly from the Voronoi region of lattice $\LA{c}$. Let the codebooks $\CG_i$'s be defined as Theorem \ref{integer_GIFC}.

{\bf Encoding Scheme:}
Transmitter $i$ transmits a codeword $x_i$, where:
\[x_i=\mode{X_i-U}{\LA{c}},\]
and $X_i \in \CG_i$ associated with the desired message.

{\bf Decoding Scheme:} Decoding can be done similar to that of theorem \ref{integer_GIFC}. In the first two steps we want to remove the noise and reconstruct:
\eq{\label{dwgn}u_i=e_i+\sum_{j\neq i} \tilde{H}(i,j) e_j.}
\begin{enumerate}

\item {\bf AGN removal using lattice decoding:}
Let \[\tilde{Y}'_i=\displaystyle{\mode{\sum_{j=1}^K \tilde{H}(i,j)X_j}{\LA{c}}},\]
and 
\[\tilde{Y}_i=\displaystyle{\sum_{j=1}^K \tilde{H}(i,j)X_j},\]
as defined in the proof of Theorem \ref{integer_GIFC}.

In order to construct $\tilde{Y}'_i$, receiver $i$, first constructs the following:
\begin{align*}
\tilde{Y}_i&=\mode{Y_i+\sum_{j=1}^K \tilde{H}(i,j) U}{\LA{c}}\\
&=\mode{\sum_{j=1}^K \tilde{H}(i,j) (x_j+ U)+ \sum_{j=1}^K\Hd(i,j)x_j+N_i}{\LA{c}}\\
&=\mode{\sum_{j=1}^K \tilde{H}(i,j) \mode{x_j+ U}{\LA{c}}+ \sum_{j=1}^K\Hd(i,j)x_j+N_i}{\LA{c}}\\
&=\mode{\sum_{j=1}^K \tilde{H}(i,j) X_j+ \sum_{j=1}^K\Hd(i,j)x_j+N_i}{\LA{c}}\\
&=\mode{\sum_{j=1}^K \tilde{H}(i,j) X_j+ N'_i}{\LA{c}},
\end{align*}
where $N'_i=\displaystyle{\sum_{j=1}^K\Hd(i,j)x_j+N_i}$. From \cite[Lemma 1]{Erez2004}, we know $X_i$ and $x_i$ are independent for all $i$'s. Therefore, $N'_i$ and $\displaystyle{{\sum_{j=1}^K \tilde{H}(i,j) X_j}}$ are independent. Variance of the new noise $N'_i$ can be upper bounded as:
\[Z'\triangleq \sigma^2(N'_i) \le Z_{add}.\]

Now, we can choose $l$ and $q$ according to Equations (\ref{l,n}) and (\ref{cond:q}), respectively with replacing $Z'$ and $Z$.

With this choice of Lattices $\LA{f}$ and $\LA{c}$, receiver $I$ can decode $\tilde{Y}'_i$ from $\tilde{Y}$, using lattice decoding.

\item {\bf Construction of an equivalent one-dimensional deterministic channel:} Note that if we had $\tilde{Y}_i$, we could obtain $u_i$ in the similar way as stated in Equation (\ref{ui}). Although $\tilde{Y}'_i \neq \tilde{Y}_i$, we can write:

\[\tilde{Y}_i=\tilde{Y}'_i+\lambda,\]
where $\lambda \in \LA{c}$, i.e., $\lambda=\G_1 {\bf z}$ where ${\bf z} \in \Z^n$. If we rewrite the Equation (\ref{ui}) for $\tilde{Y}'_i$, instead of $\tilde{Y}_i$, we have:

\begin{align*}
\mode{g^{-1}(q \G_1^{-1} \tilde{Y}'_i)_1}{q}&=\mode{g^{-1}(q \G_1^{-1} (\tilde{Y}_i+\G_1 {\bf z}))}{q}\\
&=\mode{g^{-1}(q \G_1^{-1} \tilde{Y}_i)_1 + g^{-1} q z_1}{q}\\
&=\mode{g^{-1}(q \G_1^{-1} \tilde{Y}_i)_1}{q}\\
&=u_i.
\end{align*}

Thus, we can construct $u_i$, the same way using $\tilde{Y}'_i$.

\item {\bf Determining $x_i$, and thus the intended message:} We first decode $X_i$ from $u_i$ the same way as provided in step 3 of proof of Theorem \ref{integer_GIFC}. Next we compute $x_i$ from $X_i$ as the following:

\[x_i=\mode{X_i-U}{\LA{f}}.\] 

\end{enumerate}
This completes the proof.
\end{proof}
\section{Conclusion}
\label{sec:conc}
In this work, we develop achievable rates for the $K$-user Gaussian  interference channel. To accomplish this, we study equivalent discrete deterministic interference channels (DD-IFC) and then transform results obtained to the original G-IFC.  For the DD-IFC, we define a notion of ``efficiency" which measures the ``goodness" of the codes being constructed. We develop a new family of codes that attain a high efficiency and thus achieve non-trivial rates for the  original Gaussian IFC at finite SNRs. Although our initial analysis is for channels with integer coefficients, we extend our analysis to non-integral channels by utilizing dithered lattices.

\bibliographystyle{plain}
\bibliography{strongint}
\end{document}